\newcolumntype{C}{>{\centering\arraybackslash$}p{\linewidth}<{$}}
\newcommand{\Sl}{\mathop{\rm {}SL} }
\newcommand{\nodal}{
\draw (1.9,0) .. controls  (2,.6)   and (2.6,.8) .. (3.3,-.9) ;
\draw (1.9,0) .. controls  (2,-.6)  and (2.6,-.8) .. (3.3,.9) ;
}
\newtheorem{theorem}{Theorem}[section]
\newtheorem{lemma}[theorem]{Lemma}
\theoremstyle{definition}
\theoremstyle{remark}
\newtheorem{remark}[theorem]{Remark}
\DeclareFontFamily{U}{rsf}{} \DeclareFontShape{U}{rsf}{m}{n}{ <5> <6> rsfs5 <7> <8> <9> rsfs7 <10-> rsfs10}{}
\DeclareMathAlphabet\Scr{U}{rsf}{m}{n}
\definecolor{pink}{rgb}{1,0,1}
\DeclareSymbolFont{cyrletters}{OT2}{wncyr}{m}{n}
\DeclareMathSymbol{\Sha}{\mathalpha}{cyrletters}{"58}
\newcommand{\cusp}{
\draw (0,0) .. controls  (1,.2)   and (1.6,.4) .. (2,1) ;
\draw (0,0) .. controls  (1,-.2)   and (1.6,-.4) .. (2,-1) ;
}
\newcommand{\KodairaIV}{
\path (0,0)  coordinate (O);
\path (1*60:1cm) coordinate (P1);
\path (2*60:1cm) coordinate (P2);
\path (3*60:1cm) coordinate (P3);
\path (4*60:1cm) coordinate (P4);
\path (5*60:1cm) coordinate (P5);
\path (6*60:1cm) coordinate (P6);
\draw  (P1)-- (P4);\draw  (P2)-- (P5);\draw  (P3)-- (P6);
}
\newcommand{\KodairaIIICubic}{
\draw (0,0) ellipse (22pt and 17pt);
\draw (-22*.7 pt,17*.7 pt) -- +(30:30pt);
\draw (-22*.7 pt,17*.7 pt) -- +(210:30pt);
}
\newcommand{\KodTriangle}{
\draw (-1,0)--(1,0);
\draw (-.85,-.15)--(.1,1.1);
\draw (.85,-.15)--(-.1,1.1);
}
\newcommand{\KodDeux}{
\draw (0,0) ellipse (22pt and 17pt);
\draw (-22*.7 pt,17*.7 -20 pt) -- +(30:50pt);
\draw (-22*.7 pt,17*.7-20  pt) -- +(210:20pt);
}
\newcommand{\TypeTdeux}{
\draw (-1,0) -- (1,0) node[below]{\footnotesize $1$} ;
\draw (0,-1) -- (0,1) node[left]{\footnotesize 2} ;
}
\newcommand{\TypeTriple}{
 \draw (-1,-1/2) -- (1,1/2)  node [right]{ \footnotesize 3};
}
\begin{document}

\newsavebox{\cPolyTt}
\newsavebox{\cPolyTq}
\newsavebox{\cPolyTh}
\newsavebox{\cPolyTtDual}
\newsavebox{\cPolyTqDual}
\newsavebox{\cPolyQqa}
\newsavebox{\cPolyQqb}
\newsavebox{\cPolyQqaDual}
\newsavebox{\cPolyQqbDual}
\newsavebox{\cPolyQp}
\newsavebox{\cPolyQpDual}
\newsavebox{\cPolyQh}
\newsavebox{\cPolyPp}
\newsavebox{\cPolyPpDual}
\newsavebox{\cPolyPh}
\newsavebox{\cPolyH}
\newsavebox{\cQsbis}
\newsavebox{\cThbis}
\newsavebox{\cQsbbis}
\newsavebox{\cThbbis}
\newsavebox{\cQhoctabis}
\newsavebox{\cTninebis}
\newsavebox{\cQhbis}
\newsavebox{\cQhbbis}
\savebox{\cPolyPp}{
\begin{tikzpicture}[
vertex/.style={circle,draw=black!70, fill=black!70,thick, inner sep=0pt,minimum size=5pt}, 
origin/.style={circle,draw=black!70, fill=white!10,thick, inner sep=0pt,minimum size=5pt}, 
scale=.45
]
\draw[step=1cm,color=gray, very thin, dashed] (-2,-2) grid (2,2);
\node (a3) at ( 1,1) [vertex] {};
\node (a2) at ( 0,1) [vertex] {};
\node  (b1) at ( -1,0) [vertex] {};
\node (b2) at (0,0) [origin] {};
\node (b3) at ( 1,0) [vertex] {};
\node (c2) at ( 0,-1) [vertex] {};
\draw[color=black, very thick] (a2)-- (a3) --(b3)-- (c2)--(b1)--(a2);
\end{tikzpicture}
}

\savebox{\cPolyTt}{
\begin{tikzpicture}
[vertex/.style={circle,draw=black!70, fill=black!70,thick, inner sep=0pt,minimum size=5pt},origin/.style={circle,draw=black!70, fill=white!10,thick, inner sep=0pt,minimum size=5pt}, scale=.45]
\draw[step=1cm,color=gray, very thin, dashed] (-2,-2) grid (2,2);
\node (a3) at ( 1,1) [vertex] {};
\node  (b1) at ( -1,0) [vertex] {};
\node (b2) at (0,0) [origin] {};
\node (c2) at ( 0,-1) [vertex] {};
\draw[color=black, very thick] -- (a3) -- (b1)--(c2)--(a3);
\end{tikzpicture}
}

\savebox{\cPolyQqa}{
\begin{tikzpicture}
[vertex/.style={circle,draw=black!70, fill=black!70,thick, inner sep=0pt,minimum size=5pt},origin/.style={circle,draw=black!70, fill=white!10,thick, inner sep=0pt,minimum size=5pt}, scale=.45]
\draw[step=1cm,color=gray, very thin, dashed] (-2,-2) grid (2,2);
\node (a2) at ( 0,1) [vertex] {};
\node  (b1) at ( -1,0) [vertex] {};
\node (b2) at (0,0) [origin] {};
\node (b3) at ( 1,0) [vertex] {};
\node (c2) at ( 0,-1) [vertex] {};
\draw[color=black, very thick] -- (a2) -- (b1) -- (c2)--(b3)--(a2);
\end{tikzpicture}
}

\savebox{\cPolyQqb}{
\begin{tikzpicture} 
[vertex/.style={circle,draw=black!70, fill=black!70,thick, inner sep=0pt,minimum size=5pt},origin/.style={circle,draw=black!70, fill=white!10,thick, inner sep=0pt,minimum size=5pt}, scale=.45]
\draw[step=1cm,color=gray, very thin, dashed] (-2,-2) grid (2,2);
\node (a2) at ( 0,1) [vertex] {};
\node (a3) at ( 1,1) [vertex] {};
\node  (b1) at ( -1,0) [vertex] {};
\node (b2) at (0,0) [origin] {};
\node (c2) at ( 0,-1) [vertex] {};
\draw[color=black, very thick] (b1) -- (c2)--(a3)--(a2) --(b1);
\end{tikzpicture} 
}

\savebox{\cPolyTq}{\begin{tikzpicture}
[vertex/.style={circle,draw=black!70, fill=black!70,thick, inner sep=0pt,minimum size=5pt},origin/.style={circle,draw=black!70, fill=white!10,thick, inner sep=0pt,minimum size=5pt}, scale=.45]
\draw[step=1cm,color=gray, very thin, dashed] (-2,-2) grid (2,2);
\node (a1) at ( -1,1) [vertex] {};
\node (a2) at ( 0,1) [vertex] {};
\node (a3) at ( 1,1) [vertex] {};
\node (b2) at (0,0) [origin] {};
\node (c2) at ( 0,-1) [vertex] {};
\draw[color=black, very thick] -- (a1)  -- (c2)--(a3)--(a2) --(a1);
\end{tikzpicture}}

\savebox{\cPolyQp}{
\begin{tikzpicture}[vertex/.style={circle,draw=black!70, fill=black!70,thick, inner sep=0pt,minimum size=5pt},origin/.style={circle,draw=black!70, fill=white!10,thick, inner sep=0pt,minimum size=5pt}, scale=.45]
\draw[step=1cm,color=gray, very thin, dashed] (-2,-2) grid (2,2);
\node (a1) at ( -1,1) [vertex] {};
\node (a2) at ( 0,1) [vertex] {};
\node (a3) at ( 1,1) [vertex] {};
\node  (b1) at ( -1,0) [vertex] {};
\node (b2) at (0,0) [origin] {};
\node (c2) at ( 0,-1) [vertex] {};
\draw[color=black, very thick] -- (a1) -- (b1) -- (c2)--(a3)--(a2) --(a1);
\end{tikzpicture}
}

\savebox{\cPolyH}{
\begin{tikzpicture}[vertex/.style={circle,draw=black!70, fill=black!70,thick, inner sep=0pt,minimum size=5pt},origin/.style={circle,draw=black!70, fill=white!10,thick, inner sep=0pt,minimum size=5pt}, scale=.45]
\draw[step=1cm,color=gray, very thin, dashed] (-2,-2) grid (2,2);
\node (a2) at ( 0,1) [vertex] {};
\node (a3) at ( 1,1) [vertex] {};
\node  (b1) at ( -1,0) [vertex] {};
\node (b2) at (0,0) [origin] {};
\node (b3) at ( 1,0) [vertex] {};
\node (c1) at (-1,-1) [vertex] {};
\node (c2) at ( 0,-1) [vertex] {};
\draw[color=black, very thick] (a2)-- (a3) -- (b3) -- (c2)--(c1)--(b1)--(a2);
\end{tikzpicture}
}

\savebox{\cPolyQqaDual}{\begin{tikzpicture}
[vertex/.style={circle,draw=black!70, fill=black!70,thick, inner sep=0pt,minimum size=5pt},origin/.style={circle,draw=black!70, fill=white!10,thick, inner sep=0pt,minimum size=5pt}, scale=.45]
\draw[step=1cm,color=gray, very thin, dashed] (-2,-2) grid (2,2);
\node (a1) at ( -1,1) [vertex] {};
\node (a2) at ( 0,1) [vertex] {};
\node (a3) at ( 1,1) [vertex] {};
\node  (b1) at ( -1,0) [vertex] {};
\node (b2) at (0,0) [origin] {};
\node (b3) at ( 1,0) [vertex] {};
\node (c1) at (-1,-1) [vertex] {};
\node (c2) at ( 0,-1) [vertex] {};
\node (c3) at (1,-1) [vertex] {};
\draw[color=black, very thick] -- (a1) -- (b1) --(c1)-- (c2)--(c3)--(b3)--(a3) --(a2) --(a1);
\end{tikzpicture}}

\savebox{\cQsbbis}{\begin{tikzpicture}
[vertex/.style={circle,draw=black!70, fill=black!70,thick, inner sep=0pt,minimum size=5pt},origin/.style={circle,draw=black!70, fill=white!10,thick, inner sep=0pt,minimum size=5pt}, scale=.45]
\draw[step=1cm,color=gray, very thin, dashed] (-2,-2) grid (2,2);
\node (a1) at ( -1,1) [vertex] {};
\node (a2) at ( 0,1) [vertex] {};
\node  (b1) at ( -1,0) [vertex] {};
\node (b2) at (0,0) [origin] {};
\node (b3) at ( 1,0) [vertex] {};
\node (c1) at (-1,-1) [vertex] {};
\node (c2) at ( 0,-1) [vertex] {};
\node (d1) at (-1,-2) [vertex] {};
\draw[color=black, very thick]  --(a1) -- (b1) --(c1)--(d1)-- (c2)--(b3) --(a2)--(a1);
\end{tikzpicture}}

\savebox{\cThbbis}{\begin{tikzpicture}
[vertex/.style={circle,draw=black!70, fill=black!70,thick, inner sep=0pt,minimum size=5pt},origin/.style={circle,draw=black!70, fill=white!10,thick, inner sep=0pt,minimum size=5pt}, scale=.45]
\draw[step=1cm,color=gray, very thin, dashed] (-2,-2) grid (2,2);
\node (a1) at ( -1,1) [vertex] {};
\node  (b1) at ( -1,0) [vertex] {};
\node (b2) at (0,0) [origin] {};
\node (b3) at ( 1,0) [vertex] {};
\node (c1) at (-1,-1) [vertex] {};
\node (c2) at ( 0,-1) [vertex] {};
\node (d1) at (-1,-2) [vertex] {};
\draw[color=black, very thick]  --(a1) -- (b1) --(c1)--(d1)-- (c2)--(b3) --(a1);
\end{tikzpicture}}

\savebox{\cPolyPh}{\begin{tikzpicture}
[vertex/.style={circle,draw=black!70, fill=black!70,thick, inner sep=0pt,minimum size=5pt},origin/.style={circle,draw=black!70, fill=white!10,thick, inner sep=0pt,minimum size=5pt}, scale=.45]
\draw[step=1cm,color=gray, very thin, dashed] (-2,-2) grid (2,2);
\node (p0) at (0,0) [origin] {};
\node (p1) at ( 0,1) [vertex] {};
\node (p2) at ( -1,1) [vertex] {};
\node  (p3) at ( -1,0) [vertex] {};
\node (p6) at ( 0,-1) [vertex] {};
\node (p7) at ( 1,0) [vertex] {};
\node (p9) at ( 1,1) [vertex] {};
\draw[color=black, very thick] -- (p1) -- (p2) --(p3)--(p6) --(p7)--(p9)--(p1);
\end{tikzpicture}}
\savebox{\cPolyQh}{\begin{tikzpicture}
[vertex/.style={circle,draw=black!70, fill=black!70,thick, inner sep=0pt,minimum size=5pt},origin/.style={circle,draw=black!70, fill=white!10,thick, inner sep=0pt,minimum size=5pt}, scale=.45]
\draw[step=1cm,color=gray, very thin, dashed] (-2,-2) grid (2,2);
\node (p0) at (0,0) [origin] {};
\node (p1) at ( 0,1) [vertex] {};
\node (p2) at ( -1,1) [vertex] {};
\node  (p3) at ( -1,0) [vertex] {};
\node (p4) at (-1,-1) [vertex] {};
\node (p6) at ( 0,-1) [vertex] {};
\node (p9) at ( 1,1) [vertex] {};
\draw[color=black, very thick] -- (p1) -- (p2) --(p3)-- (p4)--(p6)--(p9)--(p1);
\end{tikzpicture}}
\savebox{\cPolyPpDual}{\begin{tikzpicture}
[vertex/.style={circle,draw=black!70, fill=black!70,thick, inner sep=0pt,minimum size=5pt},origin/.style={circle,draw=black!70, fill=white!10,thick, inner sep=0pt,minimum size=5pt}, scale=.45]
\draw[step=1cm,color=gray, very thin, dashed] (-2,-2) grid (2,2);
\node (p0) at (0,0) [origin] {};
\node (p1) at ( 0,1) [vertex] {};
\node (p2) at ( -1,1) [vertex] {};
\node  (p3) at ( -1,0) [vertex] {};
\node (p4) at (-1,-1) [vertex] {};
\node (p6) at ( 0,-1) [vertex] {};
\node (p7) at ( 1,0) [vertex] {};
\node (p9) at ( 1,1) [vertex] {};
\draw[color=black, very thick] -- (p1) -- (p2) --(p3)-- (p4)--(p6) --(p7)--(p9)--(p1);
\end{tikzpicture}}
\savebox{\cPolyTh}{\begin{tikzpicture}
[vertex/.style={circle,draw=black!70, fill=black!70,thick, inner sep=0pt,minimum size=5pt},origin/.style={circle,draw=black!70, fill=white!10,thick, inner sep=0pt,minimum size=4pt}, scale=.52]
\draw[step=1cm,color=gray, very thin, dashed] (-2,-2) grid (2,2);
\node (p0) at (-.05,0) [origin] {};
\node (p1) at ( 0,1) [vertex] {};
\node (p2) at ( -1,1) [vertex] {};
\node  (p3) at ( -1,0) [vertex] {};
\node (p4) at (-1,-1) [vertex] {};
\node (p5) at ( -1,-2) [vertex] {};
\node (p9) at ( 1,1) [vertex] {};
\draw[color=black, very thick] -- (p1) -- (p2) --(p3)-- (p4)--(p5)--(p9)--(p1);
\end{tikzpicture}}
\savebox{\cPolyQpDual}{\begin{tikzpicture}
[vertex/.style={circle,draw=black!70, fill=black!70,thick, inner sep=0pt,minimum size=5pt},origin/.style={circle,draw=black!70, fill=white!10,thick, inner sep=0pt,minimum size=5pt}, scale=.4]
\draw[step=1cm,color=gray, very thin, dashed] (-2,-2) grid (2,2);
\node (p0) at (0,0) [origin] {};
\node (p1) at ( 0,1) [vertex] {};
\node (p2) at ( -1,1) [vertex] {};
\node  (p3) at ( -1,0) [vertex] {};
\node (p4) at (-1,-1) [vertex] {};
\node (p5) at ( -1,-2) [vertex] {};
\node (p6) at ( 0,-1) [vertex] {};
\node (p9) at ( 1,1) [vertex] {};
\draw[color=black, very thick] -- (p1) -- (p2) --(p3)-- (p4)--(p5)--(p6) --(p9)--(p1);
\end{tikzpicture}}
\savebox{\cPolyQqbDual}{\begin{tikzpicture}
[vertex/.style={circle,draw=black!70, fill=black!70,thick, inner sep=0pt,minimum size=5pt},origin/.style={circle,draw=black!70, fill=white!10,thick, inner sep=0pt,minimum size=5pt}, scale=.45]
\draw[step=1cm,color=gray, very thin, dashed] (-2,-2) grid (2,2);
\node (p0) at (0,0) [origin] {};
\node (p1) at ( 0,1) [vertex] {};
\node (p2) at ( -1,1) [vertex] {};
\node  (p3) at ( -1,0) [vertex] {};
\node (p4) at (-1,-1) [vertex] {};
\node (p5) at ( -1,-2) [vertex] {};
\node (p6) at ( 0,-1) [vertex] {};
\node (p7) at ( 1,0) [vertex] {};
\node (p9) at ( 1,1) [vertex] {};
\draw[color=black, very thick] -- (p1) -- (p2) --(p3)-- (p4)--(p5)--(p6) --(p7)--(p9)--(p1);
\end{tikzpicture}}
\savebox{\cPolyTtDual}{\begin{tikzpicture}
[vertex/.style={circle,draw=black!70, fill=black!70,thick, inner sep=0pt,minimum size=5pt},origin/.style={circle,draw=black!70, fill=white!10,thick, inner sep=0pt,minimum size=5pt}, scale=.42]
\draw[step=1cm,color=gray, very thin, dashed] (-2,-2) grid (2,2);
\node (p0) at (0,0) [origin] {};
\node (p1) at ( 0,1) [vertex] {};
\node (p2) at ( -1,1) [vertex] {};
\node  (p3) at ( -1,0) [vertex] {};
\node (p4) at (-1,-1) [vertex] {};
\node (p5) at ( -1,-2) [vertex] {};
\node (p6) at ( 0,-1) [vertex] {};
\node (p7) at ( 1,0) [vertex] {};
\node (p8) at ( 2,1) [vertex] {};
\node (p9) at ( 1,1) [vertex] {};
\draw[color=black, very thick] -- (p1) -- (p2) --(p3)-- (p4)--(p5)--(p6) --(p7)--(p8)--(p9)--(p1);
\end{tikzpicture}}
\savebox{\cPolyTqDual}{\begin{tikzpicture}
[vertex/.style={circle,draw=black!70, fill=black!70,thick, inner sep=0pt,minimum size=5pt},origin/.style={circle,draw=black!70, fill=white!10,thick, inner sep=0pt,minimum size=5pt}, scale=.45]
\draw[step=1cm,color=gray, very thin, dashed] (-2,-2) grid (2,2);
\node (a1) at ( -1,2) [vertex] {};
\node (a2) at ( 0,2) [vertex] {};
\node (a3) at ( 1,2) [vertex] {};
\node  (b1) at ( -1,1) [vertex] {};
\node (b2) at (0,1) [origin] {};
\node (c1) at (-1,0) [vertex] {};
\node (c2) at ( 0,0) [vertex] {};
\node (d1) at (-1,-1) [vertex] {};
\node (e1) at (-1,-2) [vertex] {};
\draw[color=black, very thick] -- (a1) -- (b1) --(c1)-- (d1)--(e1)--(c2)--(a3) --(a2)--(a1);
\end{tikzpicture}}
\savebox{\cQsbis}{\begin{tikzpicture}
[vertex/.style={circle,draw=black!70, fill=black!70,thick, inner sep=0pt,minimum size=5pt},origin/.style={circle,draw=black!70, fill=white!10,thick, inner sep=0pt,minimum size=5pt}, scale=.45]
\draw[step=1cm,color=gray, very thin, dashed] (-2,-2) grid (2,2);
\node (a2) at ( 0,2) [vertex] {};
\node (a3) at ( 1,2) [vertex] {};
\node  (b1) at ( -1,1) [vertex] {};
\node (b2) at (0,1) [origin] {};
\node (c1) at (-1,0) [vertex] {};
\node (c2) at ( 0,0) [vertex] {};
\node (d1) at (-1,-1) [vertex] {};
\node (e1) at (-1,-2) [vertex] {};
\draw[color=black, very thick] -- (b1) --(c1)-- (d1)--(e1)--(c2)--(a3) --(a2)--(b1);
\end{tikzpicture}}
\savebox{\cQhoctabis}{\begin{tikzpicture}
[vertex/.style={circle,draw=black!70, fill=black!70,thick, inner sep=0pt,minimum size=5pt},origin/.style={circle,draw=black!70, fill=white!10,thick, inner sep=0pt,minimum size=5pt}, scale=.45]
\draw[step=1cm,color=gray, very thin, dashed] (-2,-2) grid (2,4);
\node(aa3) at (1,3) [vertex] {};
\node (a2) at ( 0,2) [vertex] {};
\node (a3) at ( 1,2) [vertex] {};
\node  (b1) at ( -1,1) [vertex] {};
\node (b2) at (0,1) [origin] {};
\node (c1) at (-1,0) [vertex] {};
\node (c2) at ( 0,0) [vertex] {};
\node (d1) at (-1,-1) [vertex] {};
\node (e1) at (-1,-2) [vertex] {};
\draw[color=black, very thick] -- (b1) --(c1)-- (d1)--(e1)--(c2)--(a3)--(aa3) --(a2)--(b1);
\end{tikzpicture}}

\savebox{\cQhbis}{\begin{tikzpicture}
[vertex/.style={circle,draw=black!70, fill=black!70,thick, inner sep=0pt,minimum size=5pt},origin/.style={circle,draw=black!70, fill=white!10,thick, inner sep=0pt,minimum size=5pt}, scale=.45]
\draw[step=1cm,color=gray, very thin, dashed] (-2,-2) grid (2,4);
\node (aaa4) at ( 2,4) [vertex] {};
\node(aa3) at (1,3) [vertex] {};
\node (a2) at ( 0,2) [vertex] {};
\node (a3) at ( 1,2) [vertex] {};
\node  (b1) at ( -1,1) [vertex] {};
\node (b2) at (0,1) [origin] {};
\node (c1) at (-1,0) [vertex] {};
\node (c2) at ( 0,0) [vertex] {};
\node (d1) at (-1,-1) [vertex] {};
\draw[color=black, very thick] --(b1)--(c1)--(d1)--(c2)--(a3)--(aaa4)--(aa3) --(a2)--(b1);
\end{tikzpicture}}

\savebox{\cTninebis}{\begin{tikzpicture}
[vertex/.style={circle,draw=black!70, fill=black!70,thick, inner sep=0pt,minimum size=5pt},origin/.style={circle,draw=black!70, fill=white!10,thick, inner sep=0pt,minimum size=5pt}, scale=.45]
\draw[step=1cm,color=gray, very thin, dashed] (-2,-2) grid (2,4);
\node (aaa4) at ( 2,4) [vertex] {};
\node(aa3) at (1,3) [vertex] {};
\node (a2) at ( 0,2) [vertex] {};
\node (a3) at ( 1,2) [vertex] {};
\node  (b1) at ( -1,1) [vertex] {};
\node (b2) at (0,1) [origin] {};
\node (c1) at (-1,0) [vertex] {};
\node (c2) at ( 0,0) [vertex] {};
\node (d1) at (-1,-1) [vertex] {};
\node (e1) at (-1,-2) [vertex] {};
\draw[color=black, very thick] --(b1)--(c1)--(d1)--(e1)--(c2)--(a3)--(aaa4)--(aa3) --(a2)--(b1);
\end{tikzpicture}}

\savebox{\cQhbbis}{\begin{tikzpicture}
[vertex/.style={circle,draw=black!70, fill=black!70,thick, inner sep=0pt,minimum size=5pt},origin/.style={circle,draw=black!70, fill=white!10,thick, inner sep=0pt,minimum size=5pt}, scale=.45]
\draw[step=1cm,color=gray, very thin, dashed] (-2,-2) grid (2,4);
\node (aaa4) at ( 2,4) [vertex] {};
\node(aa3) at (1,3) [vertex] {};
\node (a2) at ( 0,2) [vertex] {};
\node (a3) at ( 1,2) [vertex] {};
\node (b2) at (0,1) [origin] {};
\node (c1) at (-1,0) [vertex] {};
\node (c2) at ( 0,0) [vertex] {};
\node (d1) at (-1,-1) [vertex] {};
\node (e1) at (-1,-2) [vertex] {};
\draw[color=black, very thick] --(c1)--(d1)--(e1)--(c2)--(a3)--(aaa4)--(aa3) --(a2)--(c1);
\end{tikzpicture}}

\savebox{\cThbis}
{\begin{tikzpicture}
[vertex/.style={circle,draw=black!70, fill=black!70,thick, inner sep=0pt,minimum size=5pt},origin/.style={circle,draw=black!70, fill=white!10,thick, inner sep=0pt,minimum size=5pt}, scale=.45]
\draw[step=1cm,color=gray, very thin, dashed] (-2,-2) grid (2,2);
\node (a3) at ( 1,2) [vertex] {};
\node  (b1) at ( -1,1) [vertex] {};
\node (b2) at (0,1) [origin] {};
\node (c1) at (-1,0) [vertex] {};
\node (c2) at ( 0,0) [vertex] {};
\node (d1) at (-1,-1) [vertex] {};
\node (e1) at (-1,-2) [vertex] {};
\draw[color=black, very thick] -- (b1) --(c1)-- (d1)--(e1)--(c2)--(a3) --(b1);
\end{tikzpicture}}
\begin{titlepage}
\begin{center}
\baselineskip=14pt{\LARGE 
A New Model for Elliptic Fibrations\\
  with a  Rank One Mordell-Weil Group:    \\
 I. Singular Fibers and  Semi-Stable Degenerations. 
\\    
}
\vspace{2 cm}
{\large  Mboyo Esole$^{\spadesuit,\heartsuit}$,  Monica Jinwoo~Kang${}^\heartsuit$,    and Shing-Tung Yau$^{\spadesuit}$ } \\
\vspace{1 cm}

\begin{tabular}{l}
\quad ${}^\spadesuit$Department of Mathematics, \ Harvard University, Cambridge, MA 02138, U.S.A.\\
\quad ${}^\heartsuit$Department of Physics, Harvard University, Cambridge, MA 02138, U.S.A. 
\end{tabular}

\vspace{1cm}

{\bf Abstract}
\vspace{.3 cm}
\end{center}
{\small

We introduce a new model for elliptic fibrations endowed with a Mordell-Weil group of rank one. We call it a   Q$_7(\mathscr{L},\mathscr{S})$ model. 
 It naturally generalizes several previous models of elliptic fibrations popular in the  F-theory literature.  The model is also explicitly smooth, thus relevant physical quantities can be computed in terms of topological invariants in  straight manner. Since the general fiber is defined by a cubic curve, basic arithmetic operations on the curve can be done using the chord-tangent group law.
We will use this model to determine the spectrum of singular fibers of an elliptic fibration of rank one and compute a generating function for its Euler characteristic. With a view toward string theory, we determine a semi-stable degeneration which is understood as a weak coupling limit in F-theory. We show that it satisfies a non-trivial topological relation at the level of homological Chern classes. This identity ensures that the D3 charge in F-theory is the same as the one in the weak coupling limit.

\vfill

Email:\  {\tt    esole at  math.harvard.edu,
jkang at physics.harvard.edu,  yau  at math.harvard.edu}
}

\end{titlepage}
\addtocounter{page}{1}
 \tableofcontents{}
\newpage

\section{Introduction and summary}

In the third section of {\em Enumeratio Linearum Tertii Ordinis} (The Enumeration of Cubics), Sir Isaac Newton finds that all cubic curves can be put in one of the following four canonical forms \cite{Ball,Talbot,Guicciadini,Lexicom}:
\begin{align}
    y x^2 + A x &= B y^3 + C y^2 + Dy +E,\\
     xy &=A x^3 + B x^2 + C x + D,\\
    y^2 &=A x^3 + B x^2 + C x + D,\\
         y &=A x^3+ B x^2 + C x + D.
\end{align}
 The second and fourth forms are  curves of genus zero  as  the variable $y$ is a rational function of $x$. The first and third forms are curves of genus one. 
Since they have rational points, they are actually elliptic curves. 
 The third canonical form was called a {\em cubic hyperbola} by Newton. Today, it is universally known as a Weierstrass equation. Its Newton's polygon is a reflexive triangle with six lattice points on its boundary\footnote{ A reflexive polygon  is a polygon with a unique lattice point in its interior. For a binary planar algebraic curve, the number of interior lattice points in their  Newton's polygon gives the arithmetic genus of the curve.}. Its Mordell-Weil group is generically trivial. On the other hand, the first canonical form has two rational points along the line of infinity. This indicates that its Mordell-Weil group is non-trivial. 
  The first canonical form is not as famous as the Weierstrass model. But as we shall see in this paper, it corresponds to the general form of a cubic curve with a non-trivial Mordell-Weil group. This remark is particularly powerful when this curve is used as a model for a  fibration.  The Newton's polygon of this curve is a reflexive quadrilateral with seven lattice points on its boundary. If we interpret the coefficients of this equation as parameters defined over a base space $B$, the first canonical form  describes an elliptic fibration with a Mordell-Weil group of rank one. 
  This  fibration is generically smooth and can be used as a Jacobian for any elliptic fibration with a non-trivial Mordell-Weil group.

The aim of this paper is  to introduce a new model for elliptic fibrations  endowed with a Mordell-Weil group of rank one over a variety $B$. We call this model Q$_7(\mathscr{L},\mathscr{S})$. It is given by a smooth hypersurface in a projective bundle  characterized by two line bundles $\mathscr{L}$ and $\mathscr{S}$. Its general fiber is modeled by a plane cubic whose  Newton's polygon is a reflexive quadrilateral with seven lattice points on its boundary.  It generalizes both  the E$_6$ elliptic fibration and the elliptic introduced recently by  Cacciatori, Cattaneo, and Van Geemen  \cite{CCVG}. Ultimately, it can be traced back to Newton's first form of cubic curves.

The Q$_7(\mathscr{L},\mathscr{S})$ model allows for a particularly friendly derivation of several geometric, topological and arithmetic  properties. We provide a classification of  its singular fibers. 
Following \cite{AE1,AE2,EFY}, we derive a  {\em generalized Sethi-Vafa-Witten formula} for these elliptic fibrations. This is   a generating function for the Euler characteristic over a base of arbitrary dimension. We also explicitly construct a semi-stable degeneration of the  Q$_7(\mathscr{L},\mathscr{S})$ model. This degeneration  will be understood {in F-theory} as a {\em weak coupling limit} mapping the elliptic fibration to an  orientifold theory \cite{Sen.Orientifold,Denef.LH,CDE,Esole:2012tf}. Inspired by string dualities, we prove a topological relation  in the Chow ring of such  elliptic fibrations, connecting their homological total Chern class with that of several sub-varieties naturally produced by the semi-stable degeneration. In the context of F-theory, we show that this relation induces the non-trivial fact that the D3 charge is the same in F-theory and its orientifold weak coupling limit. 
 For previous works on elliptic fibrations with non-trivial Mordell-Weil groups in F-theory, see  
 \cite{ AE2,EFY,
 Park:2011wv,
 Park:2011ji,
Cvetic:2013qsa,
 Braun:2014oya,
 Morrison:2014era,
 Mayrhofer:2014opa,
 Kuntzler:2014ila,
 Klevers:2014bqa,
 Braun:2014nva,
 Cvetic:2013jta,
 Borchmann:2013hta,
 Cvetic:2013uta,
 Braun:2013nqa,
 Grimm:2013oga,
 Cvetic:2013nia,
 Borchmann:2013jwa,
 Braun:2013yti,
 Mayrhofer:2012zy,
 Braun:2011zm
 }.

\subsection{Structure of the paper} 
In the rest of this introduction, we introduce some basic notions and notation before to give the formal definition of a Q$_7(\mathscr{L},\mathscr{S})$ elliptic fibration. We then summarize the results of the paper.
In section \ref{section.Properties}, we collect some basic properties of the Q$_7(\mathscr{L},\mathscr{S})$ elliptic fibration. In particular, we study some limits that recovered well known elliptic fibrations such as the $E_6$ and the $E_7$ models. 
We also classify the singular fibers of a non-singular fibration of type Q$_7(\mathscr{L},\mathscr{S})$. Then, We prove the theorem on its Euler characteristic.  
In section \ref{section.review}, we quickly review the notion of weak coupling limits of an elliptic fibration and its connection with F-theory on elliptic fourfolds. 
In section \ref{section.weak}, we consider the Q$_7(\mathscr{L},\mathscr{S})$ model in the context of F-theory on elliptic fourfolds and prove the existence of a weak coupling limit. We establish that this weak coupling gives an orientifold theory for which the tadpole matching condition is satisfied. 
Then, We compute the spectrum of branes and see that it corresponds to an orientifold with a  $Sp(1)$ stack and a Whitney brane.
We present our conclusions in section  \ref{section.conclusions}.

\subsection{ Basic notions and convention}
We summarize our convention and recall some basic definitions. 
Throughout this paper,   we  work over an algebraically closed field $k$ of characteristic zero. For all purposes, the reader can assume that $k$ is the field of complex numbers $\mathbb{C}$. 
\subsubsection{ Projective bundles} Given a vector bundle $V\rightarrow B$ defined over a variety $B$, we denote by $\pi:\mathbb{P}[V]\rightarrow B$ the projectivization of a vector bundle $V\rightarrow B$. This is extended to  weighted projective bundles, for which we write $\mathbb{P}_{\vec{w}}[V]$ where $\vec{w}$ are the weights of characterizing the weighted projective bundle. 
There are two conventions for projective bundles. We use the classical (pre-Grothendieck) convention in which a projective space $\mathbb{P}[V]$ is the space of directions of a vector space $V$ in contrast to the space of hyperplanes. 
This is particularly important when dealing with Chern classes. Our convention on projective bundle matches Fulton's book on intersection theory. 
 We denote by $\mathbb{P}[V]_{\mathscr{W}}$ the collection of hypersurfaces defined  as the zero locus of a section of the line bundle $\mathscr{W}$ in the projective bundle $\mathbb{P}[V]$.  
We denote by $\mathscr{O}_{\mathbb{P}[V]}(-1)$ the tautological line bundle of $\mathbb{P}[V]$. Its dual is denoted $\mathscr{O}_{\mathbb{P}[V]}(1)$. 
We denote by  $\mathscr{O}_{\mathbb{P}[V]}(n)$ the $n$th tensor product of  $\mathscr{O}_{\mathbb{P}[V]}(1)$ (if $n>0$) and of  $\mathscr{O}_{\mathbb{P}[V]}(-1)$ if $n<0$.
When the context is clear, we abuse notations and write    $\mathscr{O}(n)$ for   $\mathscr{O}_{\mathbb{P}[V]}(n)$.

\subsubsection{ Genus-one fibrations, rational sections}
A {\em genus-one fibration} over a variety $B$ is a surjective morphism $\varphi:Y\rightarrow B$ onto $B$ such that the general fiber is a smooth projective curve of  genus one. 
A {\em rational section} of the genus-one fibration is a rational map $\sigma: B\rightarrow Y$ such that the image of $\varphi \circ \sigma$ is dense in $B$ and restrict to the identity  on the domain of definition  of $\sigma$. In particular, a rational section can be ill-defined over a divisor of $B$.  The image of the base under a rational section gives a divisor of $Y$.
\subsubsection{Elliptic fibrations, Weierstrass models and Mordell-Weil group}
When the genus-one fibration admits a rational section, we call it an {\em elliptic fibration}. An elliptic fibration is birational to a  Weierstrass model \cite{MumS,Formulaire}. 
{\em A Weierstrass model } over $B$ is  an hypersurface cut out by a  section of $\mathscr{O}(3)\otimes \pi^{\ast} \mathscr{L}^6$ in the  projective bundle $\mathbb{P}[\mathscr{L}^2\oplus\mathscr{L}^3\oplus \mathscr{O}_B]\rightarrow B$, where $\mathscr{L}$ is a line bundle defined over $B$. The canonical form of a Weierstrass model in the notation of Tate is \cite{Formulaire,Tate}:
\begin{equation}
z y^2+a_1 x y z + a_3 y z^2=x^3 + a_2 x^2 z + a_ 4 x z^2 + a_6 z^3,
\end{equation}
where a coefficient $a_i$ is a  section of $\mathscr{L}^i$ for  $i=1,2,3,4,6$.
The rational  section of the Weierstrass model is then  $O: z=x=0$, which is a point of inflection on the generic fiber. 
 The group of rational sections of the elliptic fibration $\varphi:Y\rightarrow B$ is called the Mordell-Weil group $MW(\varphi)$.  It is a finitely generated Abelian group. Its rank and torsion group are birational invariants of the elliptic fibration. 
The definition of $MW(\varphi)$ assumes the choice of an identity element.
 For a Weierstrass equation, the canonical choice is the point of inflection  $O:z=x=0$. 
\subsubsection{ Quartic models of genus one curve and elliptic fibrations}
We can think of an elliptic fibration as an elliptic curve over the function field of the base. From that point of view, the Mordell-Weil group is really just the group of rational points.
Given a divisor of degree two on a genus one curve, the Riemann-Roch theorem ensures that the curve can be embedded in the weighted projective space $\mathbb{P}_{2,1,1}$ as a quartic curve
 \begin{equation}\label{Equation.QuarticE7}
 u^2=q_0 y^4 + q_1 y^3 z + q_2 y^2 z^2 + q_3 y z^3 + q_4 z^4.
 \end{equation}
For general values of the coefficients, this is a smooth genus one fibration. 
Genus one fibration of this type has been discussed recently by Braun and Morrison \cite{Braun:2014nva}. 
If we call $\mathscr{L}^2$ the line bundle over the base such that $u$ is a section of $\mathscr{O}(2)\otimes \pi_\ast\mathscr{L}^2$, 
then equation \eqref{Equation.QuarticE7} is of type $\mathbb{P}_{2,1,1}[\mathscr{L}^2\oplus\mathscr{M}\oplus \mathscr{O}_B]_{\mathscr{O}(4)\otimes \pi^\star \mathscr{L}^4}$ for some line bundle $\mathscr{M}$. Then $y$ is a section of $\mathscr{O}(1)\otimes\pi_\ast\mathscr{M}$, $z$ is a section of $\mathscr{O}(1)$, and the coefficient $q_i$ $(i=0,1,2,3,4)$ is a  section of $\mathscr{L}^4\otimes\mathscr{M}^{-i}$. 
 
\subsubsection{Fibrations with Mordell-Weil group of rank one}
  If we have a genus one curve endowed with a  divisor of degree two that  splits into two rational points, we can assume  in equation \eqref{Equation.QuarticE7}  that $q_4$ is a perfect square and the quartic equation can be put in the following canonical form:
\begin{equation}\label{Equation.quartic}
u(u + b_2 z^2 )+y(-c_0 y^3 + c_1 z y^2 + c_2 y z^2 + c_3 z^3)=0,
\end{equation}
 which has  double points singularities at $u=y=b_2=c_3=0$. These  singularities can be resolved by blowing up the non-Cartier divisor $u=y=0$.   This is a small and thus  crepant resolution\footnote{
 Since we blow up a divisor, we cannot change the canonical class. Hence, a small resolution is always crepant.
 }. It follows that any elliptic fibration with a Mordell-Weil group of rank one has a model for which the general fiber is a quartic curve in $Bl_{[0:0:1]}\mathbb{P}^2_{2,1,1}$. 
\begin{remark}
 If $b_2$ is a unit, we get an elliptic fibration of type E$_7$. This is a model of type $\mathbb{P}_{2,1,1}[\mathscr{L}^2\oplus\mathscr{L}\oplus \mathscr{O}_B]_{\mathscr{O}(4)\otimes \pi^\star \mathscr{L}^4}$. Such an elliptic fibration has generically rank one. 
 We can normalize the equation to have $q_4=1$. We can take the identity element of the Mordell-Weil group to be the rational point $y=u-z^2=0$ and the generator of the Mordell-Weil group to be $y=u+z^2=0$. These two sections do not intersect. 
 \end{remark}

 Computing the Jacobian of genus one quartic curves  is a classical problem that can be solved by invariant theory as developed already in the 19th century by Cayley. In his famous memoir \cite{Weil.Hermite},  Weil has even traced solution to this problem  to Hermite and Euler. Nagell's algorithm, which is discussed for example in chapter 8 of Cassel's book \cite{Cassels}, provides a direct computation of the birational map to the  Jacobian for a cubic with a rational point in characteristic different from two and three. 
More advanced tools are necessary when the characteristic is two or three. This problem has been treated in all generality in \cite{ARVT}.\\

Recently, Morrison and Park have revisited the geometry of  elliptic fibration with a Mordell-Weil group of rank one in the context of F-theory on Calabi-Yau threefolds \cite{MorrisonPark}. In their treatment, they navigate between two  models: the Jacobian and its resolution.  The Jacobian (given by a Weierstrass model) is useful to compute arithmetic properties of the elliptic fibration. They also need an explicit resolution of singularities to evaluate intersection numbers necessary in the  discussion of cancellations of anomalies. More generally, in F-theory on an elliptic threefold, a smooth model is useful to compute several physical quantities that are expressed in terms of topological invariants. For example, the Euler characteristic is used in the discussion of anomaly cancellations and intersection numbers are interpreted as physical charges. In F-theory on an elliptic fourfold, the D3 tadpole depends on the Euler characteristic of the fourfold. 
For all these reasons, it will be useful to have a smooth model for an elliptic fibration with Mordell Weil group of rank one. It would specially be useful if the  the generic fiber is a  cubic curve so that several arithmetic properties can be computed using the chord-tangent law without passing to the Jacobian.

\subsection{Definition of a  Q$_7(\mathscr{L},\mathscr{S})$ model}
The model of elliptic fibrations with Mordell-Weil group of rank one that we introduced in this paper is  an  hypersurface cut by a section of the line bundle $
\mathscr{O}(3)\otimes \pi^{\ast} \mathscr{L}^2\otimes\pi^{\ast} \mathscr{S}$ in the projective bundle 
$
\pi: \mathbb{P}[\mathscr{L}\oplus\mathscr{S}\oplus \mathscr{O}_B]\rightarrow B$,
where $\mathscr{L}$ and $\mathscr{S}$ are two line bundles over the base $B$.  
The defining equation can be put in the following canonical form: 
\begin{equation}
Q_7(\mathscr{L},\mathscr{S}):  \quad y (x^2 -c_0 y^2)+ z( c_1 y^2+b_2 x z+c_2  y z +c_3 z^2 )=0. \label{def.Q7}
\end{equation}
The Newton's polygon of this cubic equation is  a reflexive  quadrilateral with seven lattice points on its boundary. For this reason, we  denote this model  Q$_7(\mathscr{L},\mathscr{S})$.  

\begin{figure}[thb]

\begin{center}
\scalebox{1.3}{
\begin{tikzpicture}[scale=.3]
\node (Qpd) at (0,0) {
\begin{tikzpicture}
[vertex/.style={circle,draw=black!70, fill=black!70,thick, inner sep=0pt,minimum size=5pt},origin/.style={circle,draw=black!70, fill=white!10,thick, inner sep=0pt,minimum size=5pt}, scale=.55]
\draw[step=1cm,color=gray, very thin, dashed] (0,0) grid (2,3);
\node (p7) at (1,1) [origin] {};
\node (p1) at ( 0,1) [vertex] {};
\node (p2) at ( 0,2) [vertex] {};
\node  (p3) at ( 0,3) [vertex] {};
\node (p5) at (2,1) [vertex] {};
\node (p6) at ( 1,0) [vertex] {};
\node (p4) at ( 1,2) [vertex] {};
\node (p0) at ( 0,0) [vertex] {};
\draw[color=black, very thick] --(p0)-- (p1) -- (p2) --(p3)-- (p4)--(p5)--(p6)--(p0);
\end{tikzpicture}

};\node[left] at (Qpd.west) {Q$_7$}; 
\end{tikzpicture}}
\end{center}
\caption{Newton polygon of a Q$_7$ reflexive polytope. A  Q$_7(\mathscr{L},\mathscr{S})$ model is described by a section of a line bundle $O(3)\otimes \pi^{\ast} \mathscr{L}^2\otimes \pi^{\ast}\mathscr{S}$ in the projective bundle  $\mathbb{P}[\mathscr{L}^2\oplus \mathscr{S}\oplus\mathscr{O}_B].$ Its equation is automatically of type Q$_7$. \label{Q7cubic.pic}}

\end{figure}
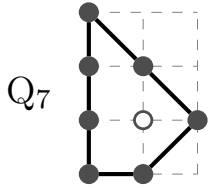

The coefficients are sections of the following line bundles: \vskip .2cm
\begin{center}
\begin{tabular}{|c|c|c|c|c|c|}
\hline
Line bundles & $\mathscr{L}^2\otimes \mathscr{S}^{-2}$ &$\mathscr{L}^2\otimes \mathscr{S}^{-1}$ &$\mathscr{L}^2$  & $\mathscr{L}^2\otimes \mathscr{S}$  & $\mathscr{L}^2\otimes \mathscr{S}^2$ \\
\hline
sections &  $c_0$ &  $ c_1$ & $ c_2$ & $c_3$ & $b_2^2$ \\
\hline
\end{tabular}
\end{center}\vspace{.5cm}

The details of the projective bundle and the section characterizing the variety are enough to ensure that we get exactly the same Jacobian as the one of an elliptic fibration of rank one. 
In the defining equation, we denote the projective coordinates as $[x:y:z]$:\\
\begin{center}
\begin{tabular}{|c|c|c|c|c|c|c|}
\hline
Line bundles & $\mathscr{O}(1)\otimes\pi^\ast \mathscr{L}$ &$\mathscr{O}(1)\otimes\pi^{\ast} \mathscr{S}$ & $\mathscr{O}(1)$ & $\mathscr{O}(3)\otimes\pi^\ast \mathscr{L}^2 \otimes\pi^\ast \mathscr{S}$\\
\hline
sections & $x$ & $y$ & $z$ & $Q_7(\mathscr{L},\mathscr{S})$\\
\hline
\end{tabular}
\end{center}\vspace{.5cm}
The special case of Q$_7(\mathscr{L},\mathscr{O}_B)$ gives the  elliptic fibration recently introduced  by  Cacciatori, Cattaneo, and Van Geemen  \cite{CCVG} while the case Q$_7(\mathscr{L},\mathscr{L})$ is the usual E$_6$ elliptic fibration \cite{AE1}.
Two rational points  of a Q$_7(\mathscr{L},\mathscr{S})$ model are 
\begin{equation} 
O:y=z=0\  \quad{and}\  \    O': y=b_2 x + c_3 z=0.
\end{equation}
These two points are  on the line $y=0$ which is tangent to the elliptic curve at $O$. 
There is also a degree-two divisor on each fiber given by  $z=x^2-c_0 y^2=0$,
 where $c_0$ is a section of $\mathscr{L}^2\otimes \mathscr{S}^{-2}$. In the case of 
Q$_7(\mathscr{L},\mathscr{L})$, $c_0$ is just a constant and we get two additional rational sections, $z=x\pm \sqrt{c_0} y=0$, which are inverse of each other in the Mordell-Weil group.

\subsection{The Jacobian fibration}
Since $z=y=0$  defines a section for the elliptic fibration $Q_7(\mathscr{L},\mathscr{S})$, we can define a birationally equivalent Weierstrass model.  
 We interpret the Jacobian of the fibration as the relative Picard scheme $\mathop{Pic}^0(Y/B)$ following \cite{ARVT} (see chapter 9 of \cite{BLR}).
Using the formula for the Jacobian of a family of plane cubics over an arbitrary base scheme \cite{ARVT}, we get the following  Weierstrass equation:
\begin{equation}
zy(y -b_2 c_1  z) = x^3 - c_2 x^2 z + (-c_0 b_2^2+ c_3 c_1) xz^2 +(c_3^2  +  b_2^2 c_2)c_0 z^3,
\end{equation}
 which admits the rational point  (see section \ref{Appendix.A}):
\begin{equation}
x=c_2+\frac{c_3^2}{b_2^2},\quad y=\frac{2c_3^3+ 2c_3 c_0 b_2^2- b_2^4 c_1 }{2b_2^3}.
\end{equation}
We can rewrite the Weierstrass form in the  short  form \cite{Formulaire, Tate}   
 $$z y^2=x^3+ F x z^2 + Gz^3,$$
 with $F$ and $G$ exactly as in \cite{MorrisonPark}  modulo the redefinition $c_2\mapsto -c_2$:
 \begin{align}\label{Jacobian}
F &=  -b_2^2 c_0 +  c_1 c_3 - \frac{1}{3} c_2^2,\quad
G = \frac{2}{3}b_2^2 c_0 c_2 + \frac{1}{4} b_2^2 c_1^2 +  c_0 c_3^2 + \frac{1}{3} c_1 c_2 c_3 - \frac{2}{27} c_2^3.
\end{align} It is interesting to see that $F$ and $G$ are respectively sections of $\mathscr{L}^4$ and $\mathscr{L}^6$.
 In particular, the line bundle $\mathscr{S}$ has disappeared.

\subsection{The spectrum of singular fibers}
The Q$_7(\mathscr{L},\mathscr{S})$ model has all the types of singular cubics with the exception of the triple line as seen on figure \ref{figure.cubic}. In particular, there is a   non-Kodaira fiber composed of two rational curves of multiplicity one and two  intersecting transversely\footnote{For non-Kodaira fibers in F-theory see \cite{EY,EFY,Esole:2014bka,Esole:2014hya,Morrison:2011mb,Hayashi:2014kca,CCVG,Cattaneo:2013vda,Braun:2013cb}.}. We call it  a fiber of type IV$^{(2)}$. Such a non-Kodaira fiber is very natural from the point of view of degeneration of genus-one curves modeled by cubic curves. It can be understood as a limiting case of a Kodaira fiber of type I$_3$  or of type IV.   See  figure \ref{figure.cubic} and table \ref{Table.E6'.fibers}.

\begin{figure}[htb]

\begin{center}
\begin{tikzpicture}
\node (a1) at (0,0){$I_0$};
\node (b1) at (2.5,-1)  {\begin{tikzpicture} [scale=.5]  \cusp  \end{tikzpicture}}       edge [<-] (a1);
\node (c1) at (2.5,1)  {\begin{tikzpicture} [scale=.5]  \nodal  \end{tikzpicture}} edge [<-] (a1);
\node  (b2) at (2*2.5,-1)  {\begin{tikzpicture} [scale=.7]  \KodairaIIICubic  \end{tikzpicture}} edge [<-] (b1)  ;
\node  (c2) at (2*2.5,1)   {\begin{tikzpicture} [scale=.7]  \KodDeux  \end{tikzpicture}}edge [<-] (c1);
\node  (b3) at (3*2.5,-1)  {\begin{tikzpicture} [scale=.7]  \KodairaIV  \end{tikzpicture}} edge [<-] (b2);
\node  (c3) at (3*2.5,1) {\begin{tikzpicture} [scale=.7]  \draw (-1,0)--(1,0);
\draw (-.85,-.15)--(.1,1.1);
\draw (.85,-.15)--(-.1,1.1);  \end{tikzpicture}}edge [<-] (c2);
\node  (a2) at (4*2.5,0)  {\begin{tikzpicture} [scale=.7]  \TypeTdeux  \end{tikzpicture}} edge [<-] (b3) edge [<-] (c3);
\node (a3) at (5*2.5,0)   {\begin{tikzpicture} [scale=.7]  \TypeTriple  \end{tikzpicture}} edge [<-] (a2);

\draw[<-](b1)--(c1);
\draw[<-](b2)--(c2);
\draw[<-](b3)--(c3);
\draw[loosely dotted](1.5*2.5,3)--(1.5*2.5,-3) node[above left] {$E_8$ };
\draw[loosely dotted](2.5*2.5,3)--(2.5*2.5,-3)  node[above left] {$E_7$ };
\draw[loosely dotted](3.5*2.5,3)--(3.5*2.5,-3) node[above left] {$E_6$ };
\draw[loosely dotted](4.5*2.5,3)--(4.5*2.5,-3) node[above left] {Q$_7(\mathscr{L},\mathscr{S})$};

\node (bb1) at (2.5,1.9)  {\footnotesize $II$};
\node (cc1) at (2.5,-1.9) {\footnotesize $ I_1$} ;
\node  (bb2) at (2*2.5,1.9) {\footnotesize $III$} ;
\node  (cc2) at (2*2.5,-1.9){\footnotesize $ I_2$} ;
\node  (bb3) at (3*2.5,1.9) {\footnotesize $IV $} ;
\node  (cc3) at (3*2.5,-1.9){\footnotesize $I_3 $} ;
\node  (aa2) at (4*2.5,1.3)  {\footnotesize $IV^{(2)}$} ;
\node (aa3) at (5*2.5,.8) {\footnotesize $IV^{(3)}$};
\end{tikzpicture}
\end{center}
\caption{Singular fibers of plane cubic curves. There are a total of 8 possible singular fibers  including the 6 Kodaira fibers  with at most 3 components ($I_1, II, I_2, III, I_3, IV$)  and the two non-Kodaira fibers $IV^{(2)}$  and $IV^{(3)}$. All the fibers at the left of a given dotted vertical line are those of a smooth elliptic fibration of the type ($E_8, E_7, E_6, Q_7)$ specified at the bottom left of the dotted line.   
\label{figure.cubic}
}

\end{figure}
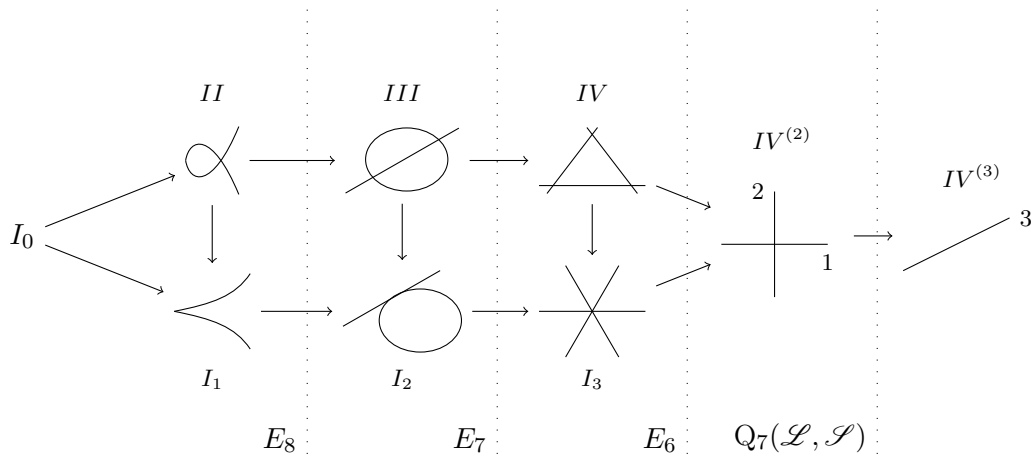

\begin{table}[htb]
\begin{center}
\begin{tabular}{|l|c|c|c|c|c|c|}
\hline
\multicolumn{1}{|c|}{Description} & Kodaira fiber & Symbols & $E_8$ & $E_7$ & $E_6$ & $Q_7$ \\
\hline
smooth genus one curve & $\checkmark$ & I$_0$ & $\checkmark$ & $\checkmark$  & $\checkmark$ & $\checkmark$  \\
\hline
 a nodal curve & $\checkmark$ &  I$_1$& $\checkmark$  & $\checkmark$ & $\checkmark$ & $\checkmark$\\
\hline
a cusp & $\checkmark$ &  II& $\checkmark$ & $\checkmark$  & $\checkmark$ & $\checkmark$\\ 
\hline
a conic and a secant  line & $\checkmark$ & I$_2$  & & $\checkmark$ & $\checkmark$ & $\checkmark$ \\
\hline  
a conic  and a   tangent line & $\checkmark$ &  III & & $\checkmark$   & $\checkmark$ & $\checkmark$ \\ 
\hline 
three lines forming a triangle & $\checkmark$ & I$_3$  &  &  & $\checkmark$ & $\checkmark$ \\ 
\hline 
three lines meeting at a point & $\checkmark$ & IV  &  &  & $\checkmark$ & $\checkmark$  \\ 
\hline  
a line and a double line &  & IV$^{(2)}$  &  &  &  & $\checkmark$ \\ 
\hline 
a triple line &  & IV$^{(3)}$  &  &  &  &  \\ 
\hline
\end{tabular} 
\end{center}
\caption{Planar  cubic curves.\label{table.cubic}  }
\end{table}

\subsection{Weak coupling limit, tadpole and flux matching condition}

From the point of view of F-theory, elliptic fibrations with a Mordell-Weil group of rank one yield an Abelian $U(1)$ gauge symmetry. 
 Depending on the dimension of the base, we are interesting at different geometric properties of the elliptic fibration. If the base is a surface, the compactification of F-theory models a six dimensional  supersymmetric gauge theory in presence of gravity. In that case, anomalies cancellations with a $U(1)$ sector are particularly subtle and have a beautiful geometric formulation.  If the base is a threefold, F-theory models a type IIB model with a non-trivial axio-dilaton profile and an Abelian $U(1)$ symmetry. Duality with M-theory gives a simple description of the origin of this $U(1)$ symmetry using the three-form of M-theory. 
 The full F-theory regime is usually strongly coupled and following Sen \cite{Sen.Orientifold}, we can explore the weak coupling limit of F-theory to make contact with well understood type IIB configurations. 
 
The weak coupling limit has a purely geometric description as a degeneration of the elliptic fibration in which the general fiber becomes  a semi-stable curve: the elliptic fibration is replaced by a ALE fibration \cite{AE2,Clingher:2012rg}. 
This description of the weak coupling limit was started  in \cite{AE2} and provides a purely geometric definition of the weak coupling limit that generalizes to any elliptic fibration regardless of the dimension of its base and independent of the Calabi-Yau condition. 
Interestingly, the topological relations that are expected to hold when F-theory is compared to its weak coupling limit are still true independently of 
all the  string theory setups necessary to make sense of them.  
For example, the D3 charge  in F-theory depends only on the Euler characteristic of the elliptic fibration and the $G_4$ flux. In a type IIB orientifold, the same D3 charge depends on the Euler characteristic of the orientifolds,  the D7 branes, the DBI fluxes supported on these D7 branes and the fluxes coming from the type IIB three form field strengths.

The weak coupling limit of a Weierstrass model is an orientifold theory \cite{Sen.Orientifold}. 
In absence of fluxes in weak and strong coupling, there is a perfect match between the D3 charge computed in type IIB and in F-theory \cite{CDE,AE1}. 
The same is true for other models of elliptic fibration such as the E$_7$, $E_6$ and D$_5$ elliptic fibrations \cite{AE2,EFY}. 
  After classifying the singular fibers of  a Q$_7(\mathscr{L}, \mathscr{S})$ model,  it is  straightforward to identify its weak coupling limit using a semi-stable degeneration following \cite{AE2}. 
The weak coupling limit is given by equation \eqref{WCL.I2.III}  which yields   the following spectrum:
$$(\star)\quad
\text{a $\mathbb{Z}_2$  orientifold}+
\text{ a Whitney brane}+
\text{an $Sp(1)$ stack}.
$$
The orientifold, the $Sp(1)$ stack, and the Whitney brane are respectively wrapping the divisors $O$, $D$, and $D_w$. 
The $Sp(1)$ stack is composed of two smooth invariant branes intersecting the orientifold transversely. 
 We prove that the tadpole matching condition is satisfied for this spectrum:
\begin{equation}
  2\chi(Y)=4\chi(O)+ 2\chi(D)+\chi^\infty(D_w).
\end{equation}
 It follows that the G-flux and the type IIB brane  fluxes match aswell \cite{Denef.LH,CDE}: 
\begin{equation}
\int_Y G_4\wedge  G_4=-\frac{1}{2} \sum_i \int_{D_i}tr(F^2).
\end{equation}
 The tadpole  matching  is a by-product of a much more general relation valid at the level of the total homological Chern classes as summarized in the following theorem: 
\begin{theorem}[Topological tadpole matching for Q$_7(\mathscr{L},\mathscr{S})$ elliptic fibrations]
A Q$_7(\mathscr{L},\mathscr{S})$ elliptic fibration endowed with the weak coupling  limit \eqref{WCL.I2.III} satisfies the topological tadpole matching condition at the level of the total Chern class: 
$$
2\varphi_* c(Y)=4\rho_* c(O)+2\rho_* c(D)+\rho_*c^\infty(D_w),
$$
where the Chern class of the Whitney brane is understood as $\rho_*c^\infty(D_w)=\rho_*c(\overline{D}_w)-\rho_*c(S)$, with $\overline{D}_w$ the normalization of $D_w$ and $S$ the cuspidial locus of the Whitney brane.
\end{theorem}
 
\section{Properties of a Q$_7(\mathscr{L},\mathscr{S})$ elliptic fibration \label{section.Properties}}

In the section, we will further explain our new model, Q$_7(\mathscr{L},\mathscr{S})$. It is similar to E$_6$ and  E$_8$ model as it is a model for a cubic.
But as we will establish in details, a  Q$_7(\mathscr{L},\mathscr{S})$ elliptic fibration has a much richer spectrum of singular fibers. It admits all possible types of singular cubics except for the triple line. In other words, its spectrum of singular fibers is  (I$_1$, II, I$_2$, III, I$_3$, IV and IV$^{(2)}$). In particular, it contains  a non-Kodaira fiber, a fiber of type $IV^{(2)}$, which consists of two rational curves of multiplicity 1 and 2 intersecting transversely at a  point. To appreciate the difference in the spectrum of singular fibers of smooth elliptic fibrations of type E$_8$, E$_7$, E$_6$, and  Q$_7(\mathscr{L},\mathscr{S})$, we review  the singular fibers of cubic plane curves and the elliptic fibrations defined in table \ref{table.cubic} and figure \ref{figure.cubic}.

The following lemma is a direct consequence of the use of the adjunction formula to compute the canonical class of an Q$_7(\mathscr{L},\mathscr{S})$ elliptic fibration: 
\begin{theorem}[Calabi-Yau condition]
An  Q$_7(\mathscr{L},\mathscr{S})$ fibration is Calabi-Yau if the line bundle $\mathscr{L}$ is the anti-canonical line bundle of the base $B$. 
\end{theorem}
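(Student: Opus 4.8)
The plan is to realize $Y$ as a smooth hypersurface in the ambient projective bundle $X:=\mathbb{P}[\mathscr{L}\oplus\mathscr{S}\oplus\mathscr{O}_B]$, to compute $K_X$ from the relative Euler sequence, and then to read off $K_Y$ from the adjunction formula.

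First I would fix the two bundles entering the construction: the rank-three bundle $V:=\mathscr{L}\oplus\mathscr{S}\oplus\mathscr{O}_B$ with projection $\pi:X=\mathbb{P}[V]\to B$, and the line bundle $\mathscr{N}:=\mathscr{O}(3)\otimes\pi^\ast(\mathscr{L}^2\otimes\mathscr{S})$, a section of which cuts out $Y$ by \eqref{def.Q7}. Since the summand $\mathscr{O}_B$ contributes trivially, $\det V=\mathscr{L}\otimes\mathscr{S}$.

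Next I would compute $K_X$. In the classical (Fulton) convention used in the paper, in which $\mathscr{O}(-1)\hookrightarrow\pi^\ast V$ is the tautological subbundle, twisting the tautological sequence by $\mathscr{O}(1)$ produces the relative Euler sequence
\[
0\longrightarrow\mathscr{O}_X\longrightarrow\pi^\ast V\otimes\mathscr{O}(1)\longrightarrow T_{X/B}\longrightarrow 0 .
\]
Taking determinants gives $\det T_{X/B}=\pi^\ast(\det V)\otimes\mathscr{O}(3)$, so that
\[
K_{X/B}=\mathscr{O}(-3)\otimes\pi^\ast(\mathscr{L}\otimes\mathscr{S})^{-1},\qquad
K_X=K_{X/B}\otimes\pi^\ast K_B=\mathscr{O}(-3)\otimes\pi^\ast\bigl(K_B\otimes\mathscr{L}^{-1}\otimes\mathscr{S}^{-1}\bigr).
\]
Then the adjunction formula for the smooth hypersurface $Y\subset X$ yields $K_Y=(K_X\otimes\mathscr{N})\big|_Y$. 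The key simplification is that the $\mathscr{O}(3)$ twist and the factor $\mathscr{S}$ cancel while one power of $\mathscr{L}$ survives:
\[
K_X\otimes\mathscr{N}=\pi^\ast\bigl(K_B\otimes\mathscr{L}^{-1}\otimes\mathscr{S}^{-1}\bigr)\otimes\pi^\ast(\mathscr{L}^2\otimes\mathscr{S})=\pi^\ast(K_B\otimes\mathscr{L}).
\]
Restricting to $Y$ and writing $\varphi=\pi|_Y$ gives $K_Y=\varphi^\ast(K_B\otimes\mathscr{L})$, which is trivial precisely when $K_B\otimes\mathscr{L}\cong\mathscr{O}_B$, i.e. when $\mathscr{L}=K_B^{-1}$ is the anti-canonical bundle of $B$; this is the asserted Calabi-Yau condition.

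The only delicate point — and the step I would double-check — is the sign bookkeeping in the Euler sequence, namely that $\det V$ enters $K_{X/B}$ with a negative power and that the twist is $\mathscr{O}(-3)$ rather than $\mathscr{O}(+3)$. This hinges on using Fulton's ``space of directions'' convention (tautological subbundle $\mathscr{O}(-1)\subset\pi^\ast V$) rather than the Grothendieck ``space of hyperplanes'' convention, and the paper has explicitly committed to the former. I would confirm the determinant computation against the baseline $B=\mathrm{pt}$, $V=k^{3}$, where $X=\mathbb{P}^2$ and the formula must reproduce $K_{\mathbb{P}^2}=\mathscr{O}(-3)$. Once this normalization is pinned down, the remaining cancellations are purely formal and the conclusion follows immediately.
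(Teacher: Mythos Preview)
Your proposal is correct and follows exactly the route the paper indicates: the paper merely asserts that the result is ``a direct consequence of the use of the adjunction formula'' without writing out the computation, and your argument supplies precisely those omitted details---the relative Euler sequence for $\mathbb{P}[V]$ in the Fulton convention, the resulting $K_X=\mathscr{O}(-3)\otimes\pi^\ast(K_B\otimes\mathscr{L}^{-1}\otimes\mathscr{S}^{-1})$, and the cancellation against $\mathscr{N}=\mathscr{O}(3)\otimes\pi^\ast(\mathscr{L}^2\otimes\mathscr{S})$ under adjunction. There is nothing to compare: your write-up \emph{is} the adjunction-formula proof the paper alludes to, with the sign conventions handled carefully.
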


\subsection{Mordell-Weil group}
An  E$_n$ elliptic fibration  ($n=8,7,6,5$ with $E_5=D_5$) has $(9-n)$ marked points defined by a  divisor  of degree $(9-n)$ on each fibers and each of the points defined by such a divisor gives a section of the elliptic fibration as the divisor splits. The elliptic fibration  Q$_7(\mathscr{L},\mathscr{S})$ has a different structure: on each fiber, we have a rational divisor of degree three as it is the case for a E$_6$ elliptic fibration. However, the divisor does not split into three rational points on every fiber but instead splits into a rational point and a divisor of degree two that does not factorize. 
This is very clear using the canonical  form \eqref{def.Q7}. The line  at infinity ($z=0$) cuts every elliptic fiber along  the following degree three divisor  
\begin{equation}z= y( x^2 - c_0 y^2)=0,
\end{equation}
which splits into a closed  point $z=y=0$ and a degree two divisor $z=x^2-c_0 y^2=0$. 
 As we circle around the  locus  $c_0$  in the base, the two points defined by  $z=x^2-c_0 y^2=0$ are exchanged. 
 This monodromy is the $\mathbb{Z}_2$  discrete group corresponding to the Galois group of the field extension needed to properly define  individually   the  two points $z=x^2-c_0 y^2=0$ over the base. 
A  Q$_7(\mathscr{L},\mathscr{S})$ elliptic fibration has an additional rational section. 
Consider the intersection with   $y=0$:
\begin{equation}
 y=0 \Longrightarrow z^2( b_2 x +c_3 z )=0 \Longrightarrow 2O+O',
\end{equation}
where 
\begin{equation}
O: y=z=0, \quad O': y=b_2 x + c_3 z=0.
\end{equation}
This indicates that $y=0$ is  tangent to the elliptic curve at $O$  and  intersects the elliptic curve at an additional point  $O'$.
These two sections intersect over the divisor $b_2=0$.

\begin{remark}
 If $\mathscr{S}=\mathscr{L}^{-1}$, $b_2$ is  a constant and thus the two sections do not intersect. 
In such a case, it is easier to  start from the following projective bundle obtained by an overall factor of $\mathscr{S}$:
\begin{equation}
\mathbb{P}[(\mathscr{L} \otimes \mathscr{S}^{-1}) \oplus \mathscr{O}_B\oplus \mathscr{S}^{-1}],
\end{equation}
and the equation is a section of $\mathscr{O}(3)\otimes \pi^\ast \mathscr{L}^2\otimes \pi^\ast \mathscr{S}^{-2}$,
which gives for $\mathscr{S}=\mathscr{L}^{-1}$: 
\begin{equation}\label{NewQ7}
\mathbb{P}[(\mathscr{L}^2 \oplus \mathscr{O}_B\oplus \mathscr{L}]_{\mathscr{O}(3)\otimes \pi^\ast \mathscr{L}^4}.
\end{equation}
\end{remark}

\begin{theorem}[Mordell-Weil group]
A  Q$_7(\mathscr{L},\mathscr{S})$ elliptic fibration has a Mordell-Weil group of rank one with  generator $O'$  and neutral element $O$. 
\end{theorem}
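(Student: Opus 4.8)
The plan is to transport the entire Mordell--Weil group to the Jacobian fibration, where the Morrison--Park analysis applies, and then to identify $O'$ with their generator.

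First I would check that $O\colon y=z=0$ and $O'\colon y=b_2x+c_3z=0$ are genuine sections. On each fiber $O$ is the point $[x:y:z]=[1:0:0]$, which lies on the cubic \eqref{def.Q7}; I take it as the neutral element defining the chord--tangent group law. The tangent line $y=0$ meets the fiber in the divisor $2O+O'$, as computed above, so $O'$ is the residual intersection point and is distinct from $O$ away from the divisor $b_2=0$.

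Since $\varphi$ carries the section $O$, its generic fiber $E/k(B)$ is an elliptic curve with origin $O$, and the map $P\mapsto[P]-[O]$ identifies $E$ with $\mathop{Pic}^0(E)$. Hence $MW(\varphi)=E(k(B))$ is isomorphic to the Mordell--Weil group of the Jacobian fibration, i.e.\ of the Weierstrass model with coefficients $F$ and $G$ of \eqref{Jacobian}; under this isomorphism $O$ is sent to the neutral element $z=x=0$. The latter Weierstrass model is precisely the Morrison--Park model, whose Mordell--Weil group is free of rank one for generic coefficients \cite{MorrisonPark}; therefore $MW(\varphi)$ is free of rank one.

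It remains to identify the generator. Under the isomorphism above, $O'$ maps to $[O']-[O]$, i.e.\ to the explicit rational point $x=c_2+c_3^2/b_2^2$, $y=(2c_3^3+2c_3c_0b_2^2-b_2^4c_1)/(2b_2^3)$ of the Weierstrass model recorded earlier; verifying this identity is the content of section \ref{Appendix.A}. Matching this point with the Morrison--Park generator then shows that $O'$ generates $MW(\varphi)$. The main obstacle is exactly this final matching --- that is, ruling out that the image of $O'$ is a nontrivial multiple $nP_0$ with $|n|>1$. I expect to settle it either by the direct comparison with the Morrison--Park generator, or, for a self-contained argument, by computing the canonical height $\hat h(O')$ through the Shioda height pairing, using the known spectrum of singular fibers $(\mathrm{I}_1,\mathrm{II},\mathrm{I}_2,\mathrm{III},\mathrm{I}_3,\mathrm{IV},\mathrm{IV}^{(2)})$ to evaluate the local contributions, and comparing with the minimal positive height on the rank-one lattice.
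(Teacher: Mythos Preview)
The paper does not supply a formal proof of this theorem; it is stated as a consequence of the discussion preceding it, namely the explicit sections $O$, $O'$ and the identification (in \S1.4 and Appendix~\ref{Appendix.A}) of the Jacobian with the Morrison--Park Weierstrass model. Your proposal makes this implicit reasoning explicit and follows exactly the same route: pass to the Jacobian, invoke \cite{MorrisonPark} for rank one, and match $O'$ with the explicit rational point. The one point you flag --- that the image of $O'$ could a priori be a nontrivial multiple of the Morrison--Park generator --- is not addressed in the paper either; your suggested resolution via the Shioda height pairing would in fact make the argument more complete than what the paper offers.
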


\subsection{A smooth hypersurface description of an elliptic fibration of rank one}
The Jacobian obtained in equation \eqref{Jacobian} is exactly the one describing a general rank one elliptic fibration as discussed in  Morrison-Park \cite{MorrisonPark} modulo the following substitution:
\begin{equation}
c_2\mapsto -c_2.
\end{equation}
This provides an interesting opportunity to obtain  a non-singular formulation of the general rank one elliptic fibration as an hypersurface in a projective bundle by generalizing the E$_6'$ fibration to have coefficients that are sections of different line bundles. 
 It is useful to notice that the  Jacobian \eqref{Jacobian} is invariant under the following scaling:
 \begin{equation}
\alpha\cdot (c_0, c_1, c_2, c_3, b_2^2)=
(\alpha^{2} c_0, \alpha c_1, c_2, \alpha^{-1} c_3 ,\alpha^{-2} b_2^2),
 \end{equation}
from which we find that the  coefficients are sections of the following line bundles\footnote{
We recall that the Weierstrass model the coefficients   
$F$ and $G$ are respectively sections of $\mathscr{L}^4$ and $\mathscr{L}^6$. 
}: 
\begin{equation}
\begin{tabular}{|c|c|c|c|c|}
\hline
$c_0$ & $c_1$ & $c_2$ & $c_3$ & $b^2_2$\\
\hline 
$\mathscr{L}^{2}\otimes\mathscr{S}^{-2}$ & $\mathscr{L}^2\otimes\mathscr{S}^{-1}$ & $\mathscr{L}^2$ & $\mathscr{L}^2\otimes\mathscr{S}$ & $\mathscr{L}^2\otimes\mathscr{S}^{2}$\\
\hline 
\end{tabular}
\end{equation}
where $\mathscr{S}$ is also a line bundle over $B$. 
We can then define a projective bundle with coordinates $[x:y:z]$ such that 
\begin{equation}
x \in\Gamma[ \mathscr{O}(1)\otimes\pi^{\ast}\mathscr{L}], \quad y \in\Gamma[\mathscr{O}(1)\otimes\pi^{\ast}\mathscr{S}], \quad z\in \Gamma[\mathscr{O}(1)],
\end{equation}
the corresponding projective bundle is $\mathbb{P}[\mathscr{L}\oplus \mathscr{S}\oplus \mathscr{O}_B]$.
With this choice, the defining equation  \eqref{def.Q7} will be a section of  the line bundle  $\mathscr{O}(3)\otimes\pi^{\ast}\mathscr{L}^2\otimes\pi^{\ast} \mathscr{S}$.
Altogether we have a new family of elliptic fibration characterized  by two line bundles $\mathscr{L}$ and $\mathscr{S}$ such that it is an hypersurface of degree   $\mathscr{O}(3)\otimes\pi^{\ast}\mathscr{L}^2\otimes\pi^{\ast} \mathscr{S}$ in the projective bundle 
$\mathbb{P}[\mathscr{L}\oplus \mathscr{S}\oplus \mathscr{O}_B]$. We call this model $Q_7(\mathscr{L},\mathscr{S})$:
\begin{equation}
Q_7(\mathscr{L},\mathscr{S}): \quad \mathbb{P}[\mathscr{L}\oplus \mathscr{S}\oplus \mathscr{O}_B]_{\mathscr{O}(3)\otimes\pi^{\ast}\mathscr{L}^2\otimes\pi^{\ast} \mathscr{S}}.
\end{equation}
All elliptic fibrations of rank one can be put in this form since it was obtained form their common Jacobian. For general values of the coefficients, this is a smooth elliptic fibration. 
A quick calculation with the adjunction formula shows that the Calabi-Yau condition for the family $Q_7(\mathscr{L},\mathscr{S})$ is $c_1(B)=c_1(\mathscr{L})$. 
\begin{theorem} An elliptic fibration of rank one is always birational to a fibration of type $Q_7(\mathscr{L},\mathscr{S})$. The fibration is Calabi-Yau when $\mathscr{L}$ is the anti-canonical line bundle of the base. 
\end{theorem}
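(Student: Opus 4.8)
The plan is to prove the two assertions separately: the birational statement is reduced to the Jacobian computation already performed, whereas the Calabi--Yau criterion is obtained by adjunction in the ambient projective bundle. For the birational half, I would begin from the fact recalled in the introduction that any elliptic fibration $\varphi:Y\to B$, i.e.\ one admitting a rational section, is birational to its Jacobian, which is a Weierstrass model. It therefore suffices to show that every Weierstrass model arising as the Jacobian of a rank-one fibration also arises as the Jacobian of some $Q_7(\mathscr{L},\mathscr{S})$ model: two elliptic fibrations with a common Jacobian are birational to one another, since each is birational to that shared Weierstrass model.

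By Morrison--Park, the Jacobian of a general rank-one elliptic fibration has Weierstrass invariants $F\in\Gamma(\mathscr{L}^4)$ and $G\in\Gamma(\mathscr{L}^6)$ expressible through the five sections $c_0,c_1,c_2,c_3,b_2$ of the line bundles tabulated above, and the computation \eqref{Jacobian} shows that the Jacobian of $Q_7(\mathscr{L},\mathscr{S})$ produces $F$ and $G$ of exactly this form after the substitution $c_2\mapsto -c_2$. Concretely, given a rank-one fibration I would read off the Morrison--Park data of its Jacobian, use the scaling freedom $\alpha\cdot(c_0,c_1,c_2,c_3,b_2^2)=(\alpha^{2}c_0,\alpha c_1,c_2,\alpha^{-1}c_3,\alpha^{-2}b_2^2)$ to pin down the auxiliary line bundle $\mathscr{S}$, and assemble the associated $Q_7(\mathscr{L},\mathscr{S})$ hypersurface; by \eqref{Jacobian} its Jacobian equals the given one, which closes the argument.

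For the Calabi--Yau statement, I would compute $K_Y$ by adjunction inside the ambient projective bundle $X=\mathbb{P}[\mathscr{L}\oplus\mathscr{S}\oplus\mathscr{O}_B]$. Writing $V=\mathscr{L}\oplus\mathscr{S}\oplus\mathscr{O}_B$, a rank-three bundle with $\det V=\mathscr{L}\otimes\mathscr{S}$, the relative Euler sequence in the classical (Fulton) convention gives $K_{X/B}=\mathscr{O}(-3)\otimes\pi^{\ast}(\det V)^{-1}$, hence $K_X=\pi^{\ast}K_B\otimes\mathscr{O}(-3)\otimes\pi^{\ast}(\mathscr{L}^{-1}\otimes\mathscr{S}^{-1})$. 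Since $Y$ is the zero locus of a section of $\mathscr{O}(3)\otimes\pi^{\ast}\mathscr{L}^2\otimes\pi^{\ast}\mathscr{S}$, adjunction yields
\[
K_Y=(K_X+[Y])\big|_Y=\pi^{\ast}(K_B\otimes\mathscr{L})\big|_Y,
\]
where the relative hyperplane class $\mathscr{O}(3)$ and the bundle $\mathscr{S}$ cancel, consistently with the disappearance of $\mathscr{S}$ from the Jacobian. Thus $K_Y$ is trivial exactly when $K_B\otimes\mathscr{L}=\mathscr{O}_B$, i.e.\ $\mathscr{L}=K_B^{-1}$ is the anti-canonical line bundle, which is the asserted condition $c_1(\mathscr{L})=c_1(B)$.

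I expect the genuine obstacle to lie in the birational half, precisely in the surjectivity of the assignment: one must verify that every admissible rank-one Jacobian is realized by globally well-defined sections on an honest $Q_7(\mathscr{L},\mathscr{S})$ model, not merely matched with the Morrison--Park family generically or over the function field. This is where the line-bundle bookkeeping, together with the interpretation of $\mathscr{S}$ as the parameter of the scaling symmetry, carries the weight of the proof; by contrast, the adjunction computation is routine once the relative canonical bundle of the projective bundle is recorded in the correct convention.
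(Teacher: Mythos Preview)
Your proposal is correct and follows essentially the same approach as the paper: the birational statement is argued by matching the Jacobian of $Q_7(\mathscr{L},\mathscr{S})$ with the Morrison--Park form (via the substitution $c_2\mapsto -c_2$ and the scaling symmetry that fixes $\mathscr{S}$), and the Calabi--Yau condition is obtained by adjunction. If anything, you are more explicit than the paper, which simply asserts ``a quick calculation with the adjunction formula'' for the second part and ``since it was obtained from their common Jacobian'' for the first; your flagged concern about global surjectivity of the assignment is a genuine subtlety that the paper does not address beyond this remark.
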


\subsection{Special cases}
The special case of Q$_7(\mathscr{L},\mathscr{O}_B)$ gives the  elliptic fibration recently introduced  by  Cacciatori, Cattaneo, and Van Geemen  \cite{CCVG} while the case Q$_7(\mathscr{L},\mathscr{L})$ is the usual E$_6$ elliptic fibration \cite{AE1}.
The rational points of a Q$_7(\mathscr{L},\mathscr{S})$ model are 
\begin{equation} 
O:y=z=0\  \quad{and}\  \    O': y=b_2 x + c_3 z=0.
\end{equation}
These two points are  on the line $y=0$ which is tangent to the elliptic curve at $O$. 
There is also a degree-two divisor on each fiber given by  $z=x^2-c_0 y^2=0$,
 where $c_0$ is a section of $\mathscr{L}^2\otimes \mathscr{S}^{-2}$. In the case of 
Q$_7(\mathscr{L},\mathscr{L})$, $c_0$ is just a constant and we get two additional rational sections, $z=x\pm \sqrt{c_0} y=0$, which are inverse of each other in the Mordell-Weil group. 

If $\mathscr{S}=\mathscr{L}^{-1}$, we  start from the following projective bundle obtained by an overall factor of $\mathscr{S}$\footnote{Hartshorne  Chap II, Lemma 7.9.}:
\begin{equation}
\mathbb{P}[(\mathscr{L} \otimes \mathscr{S}^{-1}) \oplus \mathscr{O}_B\oplus \mathscr{S}^{-1}],
\end{equation}
and the equation is a section of $\mathscr{O}(3)\otimes \pi^\ast \mathscr{L}^2\otimes \pi^\ast \mathscr{S}^{-2}$,
which gives for $\mathscr{S}=\mathscr{L}^{-1}$: 
\begin{equation}\label{NewQ7}
\mathbb{P}[(\mathscr{L}^2 \oplus \mathscr{O}_B\oplus \mathscr{L}]_{\mathscr{O}(3)\otimes \pi^\ast \mathscr{L}^4}.
\end{equation}

\subsection{Birational map to a Jacobi quartic}
We multiply the defining equation \ref{def.Q7}, by $y$ and we replace the variable $x$ by $u=xy$. This can be explained as a birational map in the ambient projective bundle to turn it into a weighted projective bundle: 

\begin{equation}
\begin{aligned}
 \mathbb{P}[\mathscr{L}\oplus \mathscr{S}\oplus \mathscr{O}_B] &\rightarrow  \mathbb{P}_{2,1,1}[\mathscr{M}\oplus \mathscr{S}\oplus \mathscr{O}_B], \quad \mathscr{M}=\mathscr{L}\otimes \mathscr{S}\\
 [x:y:z] & \mapsto  [u:y:z]=[xy:y:z].
\end{aligned}
\end{equation}
The variable  $u$ is a section of $\pi^{\ast}\mathscr{M}\otimes \mathscr{O}(2)$. The defining  equation is a section of  $\pi^{\ast}\mathscr{M}^2\otimes \mathscr{O}(4)$: 
\begin{equation}
u(u + b_2 z^2 )+y(-c_0 y^3 + c_1 z y^2 + c_2 y z^2 + c_3 z^3)=0.
\end{equation}
This defines an elliptic fibration whose generic fiber is given by a quartic curve with  the  Newton's polygon given in figure \ref{quartic.Q7}. 
The rational sections are at $y=u=0$  and $y=u+b_2 z^2=0$.
The equation is singular at $u=y=b_2=c_3=0$. We can blow up the non-Cartier divisor $u=y=0$ to resolve with this singularity. The ambient  space becomes the blow up of the weighted projective bundle $Bl_{[0:0:1]}\mathbb{P}_{2,1,1}$.

\begin{figure}[bth]
\begin{center}
\begin{tikzpicture}
[vertex/.style={circle,draw=black!70, fill=black!70,thick, inner sep=0pt,minimum size=5pt},origin/.style={circle,draw=black!70, fill=white!10,thick, inner sep=0pt,minimum size=5pt}, scale=.8]
\draw[step=1cm,color=gray, very thin, dashed] (0,0) grid (4,2);
\node (p0) at (1,1) [origin] {};
\node (p1) at ( 0,1) [vertex] {};
\node (p2) at ( 0,2) [vertex] {};
\node  (p3) at ( 2,1) [vertex] {};
\node (p4) at (4,0) [vertex] {};
\node (p5) at ( 3,0) [vertex] {};
\node (p6) at ( 2,0) [vertex] {};
\node (p7) at ( 1,0) [vertex] {};
\draw[color=black, very thick] -- (p1) -- (p2) --(p3)-- (p4)--(p5)--(p6)--(p7)--(p1);
\end{tikzpicture}
\end{center}
\caption{Quartic Q$_7$:  a reflexive quadrilateral with seven lattice points on its boundary. 
This is the Newton's polygon  for the quartic in equation \eqref{Equation.quartic}.
 \label{quartic.Q7}}
\end{figure}
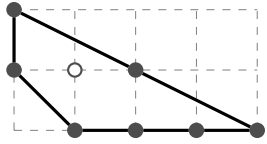

\subsection{A double cover embedded in the  Q$_7(\mathscr{L},\mathscr{S})$ fibration}
Inside a Q$_7(\mathscr{L},\mathscr{S})$ fibration, we can use the divisor of degree 2 along each fiber to define a double cover the base $X$: 
\begin{equation}X:z=x^2-c_0 y^2=0.
\end{equation}
It admits the following involution:
\begin{equation}
\sigma:\quad x\mapsto -x, 
\end{equation}
for which  $c_0=0$ is the locus of fixed point. 
In the definition of $X$, the divisor $z=0$ cuts $\mathbb{P}(E)$ along a  projective bundle $\mathbb{P}(\mathscr{L}\oplus \mathscr{S})$ with coordinates $[x:y]$. The variety $X$ naturally lives in that projective bundle with the defining equation $x^2+c_0 y^2=0$. Along  $X$, the projective coordinate  $y$ never vanishes since  otherwise we would have $z=y=x=0$, which is not possible because $[x:y:z]$ are projective coordinates of the projective bundle. It follows that we can just define $X$ in the affine patch $y\neq 0$ by taking the affine coordinate $\xi=x/ y$, which is a section of $\mathscr{L}\otimes \mathscr{S}^{-1}$:
\begin{equation}
X:   \quad \xi^2=-c_0, \quad \text{with the involution }  \xi\mapsto -\xi,
\end{equation}
which expresses  $X$ as the zero locus of a section of $\mathscr{L}^2\otimes \mathscr{S}^{-2}$.  Note that this is not an orientifold. Using  adjunction formula, it is easy to see that the  elliptic fibration  $Y\rightarrow B$ is a Calabi-Yau $(n+1)$-fold  if  and only if  $c_1(B)=c_1(\mathscr{L})$ while  the double cover $X\rightarrow B$  branched on $c_0$ is a Calabi-Yau $n$-fold if and only if 
$c_1(B)=c_1(\mathscr{L})-c_1(\mathscr{S})$.
The two are compatible only when $S$ is trivial, that is for the Q$_7(\mathscr{L},\mathscr{O}_B)$ model.
\begin{lemma}
An elliptic fibration of type Q$_7(\mathscr{L},\mathscr{O}_B)$ admits a divisor which is a double cover $X$  of the base. The Q$_7(\mathscr{L},\mathscr{O}_B)$ elliptic fibration is Calabi-Yau if and only if the double cover  $X$ of the base is also Calabi-Yau. 
\end{lemma}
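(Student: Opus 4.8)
The plan is to compute the relevant canonical classes by the adjunction formula and then to observe that the two Calabi-Yau conditions coincide precisely when $\mathscr{S}$ is trivial, which is exactly the hypothesis $\mathscr{S}=\mathscr{O}_B$ defining the Q$_7(\mathscr{L},\mathscr{O}_B)$ model. The existence of the divisor $X$ is already established earlier in the excerpt: in a Q$_7(\mathscr{L},\mathscr{S})$ fibration the line at infinity $z=0$ cuts each fiber along a degree-three divisor that splits off a degree-two piece $z=x^2-c_0 y^2=0$, and this piece defines a subvariety $X$ of $Y$ which, in the patch $y\neq 0$, is the zero locus $\xi^2=-c_0$ of a section of $\mathscr{L}^2\otimes\mathscr{S}^{-2}$. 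For the special case $\mathscr{S}=\mathscr{O}_B$, the section $c_0$ becomes a section of $\mathscr{L}^2$, and $X\to B$ is a genuine double cover branched along $\{c_0=0\}$.

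First I would record the two adjunction computations. For the elliptic fibration $Y\hookrightarrow\mathbb{P}[\mathscr{L}\oplus\mathscr{S}\oplus\mathscr{O}_B]$ cut out by a section of $\mathscr{O}(3)\otimes\pi^\ast\mathscr{L}^2\otimes\pi^\ast\mathscr{S}$, the adjunction formula together with the relative Euler sequence for the projective bundle gives the canonical class of $Y$; the already-stated Calabi-Yau criterion (Theorem, Calabi-Yau condition) is that $\mathscr{L}$ is the anti-canonical bundle of $B$, i.e.\ $c_1(B)=c_1(\mathscr{L})$. Second, I would run the same computation for the double cover. Writing $X\to B$ as a double cover branched along the divisor $\{c_0=0\}$ with $c_0$ a section of $\mathscr{L}^2\otimes\mathscr{S}^{-2}$, the standard formula for the canonical class of a cyclic cover, $K_X=\rho^\ast\bigl(K_B+\tfrac{1}{2}(\text{branch divisor})\bigr)$, yields $c_1(K_X)=\rho^\ast\bigl(-c_1(B)+c_1(\mathscr{L})-c_1(\mathscr{S})\bigr)$, so the Calabi-Yau condition for $X$ is $c_1(B)=c_1(\mathscr{L})-c_1(\mathscr{S})$.

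With both conditions in hand, the equivalence is immediate by comparison. The two equations $c_1(B)=c_1(\mathscr{L})$ and $c_1(B)=c_1(\mathscr{L})-c_1(\mathscr{S})$ hold simultaneously exactly when $c_1(\mathscr{S})=0$; under the hypothesis $\mathscr{S}=\mathscr{O}_B$ this is automatic, so for a Q$_7(\mathscr{L},\mathscr{O}_B)$ fibration the Calabi-Yau condition for $Y$ and the Calabi-Yau condition for $X$ become the single identity $c_1(B)=c_1(\mathscr{L})$, proving that $Y$ is Calabi-Yau if and only if $X$ is. I expect the only delicate point to be the double-cover canonical class formula: one must be careful that $X$ is smooth (which holds for general $c_0$, the branch divisor being reduced), and that the branch divisor is indeed the full zero locus of the section of $\mathscr{L}^2\otimes\mathscr{S}^{-2}$, so that the factor of $\tfrac{1}{2}$ in the Hurwitz-type formula reproduces $c_1(\mathscr{L})-c_1(\mathscr{S})$ rather than its double; everything else is routine bookkeeping with first Chern classes that has essentially been carried out in the preceding paragraph of the excerpt.
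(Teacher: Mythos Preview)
Your proposal is correct and follows essentially the same approach as the paper: the paragraph immediately preceding the lemma computes, via adjunction, that $Y$ is Calabi-Yau iff $c_1(B)=c_1(\mathscr{L})$ and that $X$ is Calabi-Yau iff $c_1(B)=c_1(\mathscr{L})-c_1(\mathscr{S})$, and then observes these coincide exactly when $\mathscr{S}$ is trivial. Your write-up is slightly more explicit in invoking the Hurwitz-type formula $K_X=\rho^\ast(K_B+\tfrac{1}{2}[\text{branch}])$ for the double cover, but the argument is the same.
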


\subsection{Spectrum of singular fibers of a  Q$_7(\mathscr{L},\mathscr{S})$ elliptic fibration}
 A Q$_7(\mathscr{L},\mathscr{S})$ elliptic fibration admits  up to $7$ different types of singular fibers including the non-Kodaira fiber IV$^{(2)}$ constituted of  two rational curves of multiplicity one and two intersecting transversally. These 7 different types represent all the different types of singular fibers of a cubic curve with the exception of the triple line, which is the only  multiple singular cubic curve. 

\subsubsection{General picture} 
The spectrum of singular fibers of a  Q$_7(\mathscr{L},\mathscr{S})$ elliptic fibration can be easily obtained by  a direct analysis of its defining equation. We can use the birationally equivalent Weierstrass model to characterize irreducible singular curves (nodal and cuspidial curves). Since the elliptic fibration is nonsingular, to classify  reducible singular fibers, we analyze the conditions under which the defining equation factorizes. A cubic curve can  factorize into a line and a conic. This corresponds to the fibers of type I$_2$ if the line and the conic meet at two distinct points or of type  III if the line is tangent to the conic. The conic can further splits into two lines so that the singular fiber has the structure of a triangle, Kodaira type I$_3$, or a star, Kodaira type IV. If  the conic specializes to a double line, we have a non-Kodaira fiber that we call a type IV$^{(2)}$. We summarize the condition to have all these fibers in table \ref{Table.E6'.fibers} and figure \ref{figure.cubic}. It is  also instructive to look at the spectrum of singular fibers of elliptic fibrations of type  Q$_7(\mathscr{L},\mathscr{S})$, E$_7$, E$_8$ and E$_6$ together since it reveals some beautiful patterns as illustrated in  figure \ref{figure.cubic}. 
In the next subsection, we will methodically derive table \ref{Table.E6'.fibers} .

\begin{table}[bth]
\begin{center}
\begin{tabular}{|cc|c|c|}
\hline
\multicolumn{2}{|c|}{Fiber}& $j$-invariant & Algebraic Conditions \\
\hline
\hline
I$_1$ & {\begin{tikzpicture} [scale=.3]  \nodal  \end{tikzpicture}} & $\infty$ & $\Delta=0$ \\
\hline
II & {\begin{tikzpicture} [scale=.3]  \cusp  \end{tikzpicture}}  & $0$ & $F=G=0$\\
\hline
\hline
I$_2$ & {\begin{tikzpicture} [scale=.3]  \KodDeux  \end{tikzpicture}}  & $\infty$ &   $b_2=c_3=0$ \\
\hline
III & {\begin{tikzpicture} [scale=.3]  \KodairaIIICubic  \end{tikzpicture}}   & $1728$& $c_0=c_1=c_2=c_3=0$ or $b_2=c_2=c_3=0$\\
\hline
\hline
I$_3$ 
& {\begin{tikzpicture} [scale=.3]  \KodTriangle  \end{tikzpicture}}   & $\infty$& $b_2=c_3=c_1^2+4 c_0 c_2=0$\\
\hline
IV &
 {\begin{tikzpicture} [scale=.3]  \KodairaIV \end{tikzpicture}} & $0$& $b_2=c_2=c_1=c_3=0$ \\
\hline
IV$^{(2)}$ &
 $\begin{matrix} \\ {\begin{tikzpicture} [scale=.3]  \TypeTdeux \end{tikzpicture}}\end{matrix}$  & undefined $``\frac{0}{0}"$ & $b_2=c_0=c_1=c_2=c_3=0$ \\
\hline
\hline
\end{tabular}
\end{center}
\caption{
Singular fibers of the  Q$_7(\mathscr{L},\mathscr{S})$ elliptic fibration $x^2 y +  b_2 x z^2-c_0 y^3+ c_1 z y^2+c_2 z^2 y + c_3 z^3$. 
The fiber I$_3$  and II (with $b_2=0$) are non-split when  $c_0$ is not a perfect square. The fiber III  (with $b_2\neq 0$) can also be non-split when $b_2$ is a not a perfect square.
\label{Table.E6'.fibers}
}
\end{table}

\subsubsection{Irreducible singular fibers}
Using the Weierstrass moddel, we can easily determine the irreducible singular fibers or the Q$_7(\mathscr{L},\mathscr{S})$ fibration: we have a nodal curve at a general point of the discriminant locus and a cuspidial curve at a general point of $F=G=0$:
\begin{equation}
\text{I}_1:\quad \Delta=0, \quad \text{II}:  \quad F=G=0.
\end{equation}

\subsubsection{Reducible singular fibers}
For reducible singular fibers, we  can find the condition for  factorizing the defining equation \eqref{def.Q7}. Assuming that $b_2\neq 0$, the only factorization is 
\begin{equation}
\text{III}: \quad x(x y+ b_2 z^2)=0,
\end{equation}
which requires $c_i=0$. The fiber is of type III since it is composed of two rational curves (a line and a conic) meeting at a double point. 
When $b_2=c_i=0$, the fiber III enhances to a non-Kodaira fiber  of type $IV^{(2)}$ as the degree two:
\begin{equation}
\text{IV}^{(2)}: \quad x^2 y=0.
\end{equation}
If we assume that $b_2=0$, we have much richer spectrum. In order to be able to factorize a linear term, we have to assume $b_2=c_3=0$, which gives 
\begin{equation}
\text{I}_2:\quad y(x^2 -c_0 y^2+ c_1  y z +c_2  z^2)=0.
\end{equation}
This singular fiber is constituted of two rational curves ( a line and a conic) meeting transversally at two distinct points.
  
\begin{remark}
The  section $O'$ is now the full line $y=0$. The zero section $O$ intersect only that line and does not intersect the conic. 
We also note that the two points of intersection can be seen as the intersection of the conic with the section $O'$. 
\end{remark}

We have an enhancement to a fiber of type III when the line becomes tangent to the conic. This happens when  $b_2=c_2=c_3=0$ and the line and the conic intersect at a double point:
\begin{equation}
\text{III}:\quad y(x^2 - c_0 y^2+ c_1  y z)=0.
\end{equation}
The fiber $\text{I}_2$ can enhance to a fiber $\text{I}_3$ when the conic degenerates into two  lines. This requires the additional condition $c_1^2-4 c_0 c_2=0$ so that all together we have a $\text{I}_3$ fiber when $b_2=c_3=c_1^2-4 c_0 c_2=0$ ($c_1 \neq 0$ or $c_2\neq 0$). The equation of the fiber is:
\begin{equation}
\text{I}_3:\quad y\Big(x^2-c_0(y-\frac{c_1}{2 c_0} z)^2\Big)=0  \quad or \quad y\Big(x^2+c_2(z-\frac{c_1}{2 c_2} y)^2\Big) =0.
\end{equation}
This $\text{I}_3$ fiber is split if and only if  $c_0$ is a perfect square, otherwise, we have a non-split fiber with a $\mathbb{Z}_2$ torsion. 
If $b_2=c_2=c_1=0$, we have a fiber of type IV$^{ns}$. 
\begin{equation}
\text{IV}^{ns}:\quad y(x^2-c_0 y^2)=0.
\end{equation}
This fiber is split only if $c_0$ is a perfect square. Otherwise it is a non-split fiber with a $\mathbb{Z}_2$ torsion. 
Finally, if $b_2=c_i=0$, we have a fiber of type IV$^{(2)}$.

\subsection{Generalized Sethi-Vafa-Witten formula}

The Sethi-Vafa-Witten formula gives the  Euler characteristic of a Calabi-Yau fourfolds defined by a generic Weierstrass model \cite{Sethi:1996es}. In this section,  we present a generating  function for the Euler characteristic of a  Q$_7(\mathscr{L},\mathscr{S})$ model over a base of arbitrary dimension. We also do not assume the Calabi-Yau condition. This is done using a push-forward formula following \cite{AE2,EFY}.

\begin{theorem}[Euler characteristic of Q$_7(\mathscr{L},\mathscr{S})$]\label{Theorem.Euler}
Let $L=c_1(\mathscr{L})$, $S=c_1(\mathscr{S})$ and $c_k$ be the $k$-th Chern class $c_k(TB)$ of the base $B$. 
Then, the push-forward of the total homological Chern class of the elliptic fibration $Y$ is:

\begin{align}
\begin{aligned}
\pi_\ast c(Y) &=
\frac{6 (2 L  + 2 L^2 - L S +S^2)}{(1 + 2 L - 2 S) (1 + 2 L + S)} c(B)\\
& =12 L t + (12 c_1 L - 36 L^2 + 6 L S - 6 S^2) t^2 + \\
& \quad (12 c_2 L - 
    36 c_1 L^2 + 96 L^3 + 6 c_1 L S - 36 L^2 S - 6 c_1 S^2 + 54 L S^2 - 
    6 S^3) t^3+\cdots 
    \end{aligned}
\end{align}
\end{theorem}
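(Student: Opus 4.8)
The plan is to realize $Y$ as a smooth hypersurface in the ambient projective bundle $X=\mathbb{P}[\mathscr{L}\oplus\mathscr{S}\oplus\mathscr{O}_B]$ and to transport its total Chern class down to $B$ by the push-forward technology of \cite{AE2,EFY}. Write $\pi:X\to B$, let $H=c_1(\mathscr{O}_X(1))$, and let $j:Y\hookrightarrow X$ be the inclusion, so $\varphi=\pi\circ j$. By the definition of the model, $Y$ is cut out by a section of $\mathscr{O}(3)\otimes\pi^{\ast}\mathscr{L}^2\otimes\pi^{\ast}\mathscr{S}$, so its class is $[Y]=3H+\pi^{\ast}(2L+S)$ in $A^{\ast}(X)$. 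The relative Euler sequence for $\mathbb{P}[\mathscr{L}\oplus\mathscr{S}\oplus\mathscr{O}_B]$ gives the total Chern class of the ambient space,
\[
c(TX)=\pi^{\ast}c(B)\,(1+H)(1+H+L)(1+H+S),
\]
and, since the normal bundle of $Y$ is $\mathscr{O}_X(Y)|_Y$, adjunction yields $c(Y)=j^{\ast}\!\big(c(TX)/(1+[Y])\big)$.

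First I would reduce the computation of $\varphi_{\ast}c(Y)$ to a single push-forward on the ambient bundle. Using the projection formula together with $j_{\ast}j^{\ast}\alpha=\alpha\cdot[Y]$ (recall $j_{\ast}1=[Y]$), one gets
\[
\varphi_{\ast}c(Y)=\pi_{\ast}\,j_{\ast}c(Y)
=\pi_{\ast}\!\left(c(TX)\,\frac{[Y]}{1+[Y]}\right)
=c(B)\cdot\pi_{\ast}\!\left[(1+H)(1+H+L)(1+H+S)\,\frac{3H+2L+S}{1+3H+2L+S}\right].
\]
The bracketed expression is a rational function of $H$ whose coefficients are pulled back from $B$, so the remaining task is a purely projective-bundle push-forward.

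Next I would evaluate $\pi_{\ast}$ by the residue (Lagrange-interpolation) form of the push-forward formula for a $\mathbb{P}^2$-bundle, as in \cite{AE2,EFY}: for a power series $P(H)$ with base-pulled-back coefficients, $\pi_{\ast}P(H)$ is the sum, over the three summands, of $P$ evaluated at the corresponding root $H\in\{0,-L,-S\}$ of the Chow relation, divided by the product of the differences of the roots. Evaluating $1+[Y]=1+3H+2L+S$ at these three roots produces the factors $1+2L+S$, $1+S-L$, and $1+2L-2S$; collecting the three residues over a common denominator, the spurious factor $1+S-L$ cancels and one is left with
\[
\varphi_{\ast}c(Y)=\frac{6\,(2L+2L^2-LS+S^2)}{(1+2L-2S)(1+2L+S)}\,c(B).
\]
Expanding this rational expression as a formal power series (the variable $t$ merely bookkeeping the codimension in $B$, with $c(B)=1+c_1t+c_2t^2+\cdots$) and collecting terms through order $t^3$ then reproduces the displayed series.

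The main obstacle is the faithful execution of the push-forward in the third step: one must fix the projective-bundle convention (the paper's \emph{directions} convention, matching Fulton) so that the roots $\{0,-L,-S\}$ and the signs entering the residue formula are correct, and then carry out the algebraic simplification that clears the denominator $1+S-L$. A reassuring internal check is that the leading term of the closed form is $12L\,t$, matching $\varphi_{\ast}c_2(TY)$ and the known count of singular fibers of an elliptic fibration; another is that the line bundle $\mathscr{S}$, which drops out of the Jacobian invariants $F,G$, must nevertheless survive in $\chi(Y)$, consistent with the manifest dependence on $S$ in the answer. The remaining ingredients — adjunction and the relative Euler sequence — are routine.
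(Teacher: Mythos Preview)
Your proposal is correct and follows essentially the same route as the paper: both compute $c(Y)$ by adjunction in the ambient $\mathbb{P}^2$-bundle, write $c(TX)$ via the relative Euler sequence, and push forward the resulting rational function of $H$ using the same residue-type formula (the paper phrases it as two evaluations at $H=-L,\,-S$ after subtracting the $0$- and $1$-jets, which is algebraically identical to your three-point Lagrange form with roots $\{0,-L,-S\}$). Your remark that the spurious denominator $1+S-L$ cancels is exactly the algebraic simplification the paper leaves implicit when it records only the final closed form.
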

This gives a generating function for the Euler characteristic: if the base is of dimension $d$, then the Euler characteristic of $Y$ is given by the coefficient of $t^d$. 
If $Y= Q_7(\mathscr{L},\mathscr{S})$ is a Calabi-Yau variety, we can simplify further the expression by using $L=c_1$. 
\begin{lemma}
For  $Y$  a Calabi-Yau threefold or fourfold,  we get the Euler characteristics respectively as 
\begin{equation}
\chi(Y)=-6 (4 c_1^2 - c_1 S + S^2), \quad 
\chi(Y)=6 (10 c_1^3 + 2 c_1 c_2 - 5 c_1^2 S + 8 c_1 S^2 - S^3)
\end{equation}
\end{lemma}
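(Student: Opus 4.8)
The plan is to derive this lemma as an immediate specialization of Theorem~\ref{Theorem.Euler}, which already furnishes the full generating function $\pi_\ast c(Y)$ in terms of $L=c_1(\mathscr{L})$, $S=c_1(\mathscr{S})$, and the Chern classes $c_k=c_k(TB)$ of the base. The governing observation is that, for a base of dimension $d$, the Euler characteristic $\chi(Y)$ is recovered as the coefficient of $t^d$ in the expansion displayed in Theorem~\ref{Theorem.Euler}: a Calabi--Yau threefold corresponds to $d=2$ (coefficient of $t^2$), and a Calabi--Yau fourfold to $d=3$ (coefficient of $t^3$). No new geometric input is required beyond that theorem; the argument is purely a matter of extracting the right coefficient and imposing the Calabi--Yau constraint.

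Next I would invoke the Calabi--Yau condition. By the adjunction computation recorded earlier, a $Q_7(\mathscr{L},\mathscr{S})$ fibration is Calabi--Yau precisely when $\mathscr{L}$ is the anti-canonical bundle of $B$, i.e. $c_1(B)=c_1(\mathscr{L})$, so that $L=c_1$. Substituting $L=c_1$ into the coefficient of $t^2$ in Theorem~\ref{Theorem.Euler} yields
\begin{equation}
12c_1^2-36c_1^2+6c_1 S-6S^2=-24c_1^2+6c_1 S-6S^2=-6(4c_1^2-c_1 S+S^2),
\end{equation}
which is the asserted threefold formula. For the fourfold I would perform the same substitution $L=c_1$ in the coefficient of $t^3$, whereupon the $c_1^3$ contributions combine as $-36+96=60$, the $c_1^2 S$ contributions as $6-36=-30$, and the $c_1 S^2$ contributions as $-6+54=48$, giving after collection
\begin{equation}
60c_1^3+12c_1 c_2-30c_1^2 S+48c_1 S^2-6S^3=6(10c_1^3+2c_1 c_2-5c_1^2 S+8c_1 S^2-S^3),
\end{equation}
matching the claimed fourfold expression.

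The entire derivation is therefore a mechanical specialization of Theorem~\ref{Theorem.Euler}, and I do not anticipate any genuine obstacle once that theorem is in hand. The only points demanding a modicum of care are bookkeeping ones: confirming the grading convention so that the coefficient of $t^d$ is indeed $\chi(Y)$ for a base of dimension $d$ (a consequence of $\pi_\ast$ lowering homological degree by the fiber dimension one), and making sure that the symbols $c_1,c_2$ in the final formulas are read as the Chern classes $c_k(TB)$ of the base rather than of $Y$ or of a line bundle. Since the generating function already encodes the complete Chern-class information of $Y$, the proof collapses to the two substitutions above.
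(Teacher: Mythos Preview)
Your proposal is correct and follows exactly the approach implicit in the paper: the lemma is stated immediately after Theorem~\ref{Theorem.Euler} with the remark that one simplifies by setting $L=c_1$, and your proof carries out precisely this specialization, extracting the $t^2$ and $t^3$ coefficients and verifying the arithmetic. No separate argument is given in the paper, so your write-up is in fact more explicit than the original.
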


\begin{proof}[Proof of  theorem \ref{Theorem.Euler}] 
The ambient space in which we define $Y$ is the  following projective bundle
\begin{equation}
\pi: \mathbb{P}(\mathscr{L}\oplus \mathscr{S}\oplus \mathscr{O}_B)\rightarrow B.
\end{equation}
We would like to compute the pushforward from its Chow's ring to the Chow's ring of the base. 
We use the fact that:
\begin{equation}
\pi_{\ast} \big(1+\zeta+\zeta^2+\cdots \big)=\frac{1}{(1+L)(1+S)}.
\end{equation}
It is easy to see that 
\begin{equation}
\frac{1}{(1+Lt )(1+St)}=\sum_{k\geq 0} (-1)^k P_k(L,S) t^k, 
\end{equation}
where we have inserting a variable $t$ to track the other. We also have:
$$P_k(L,S)=L^k+ L^{k-1} S + \cdots + S^k=\frac{L^{k+1}-S^{k+1}}{L-S}.$$
Comparing terms of the same dimension, we get: 
\begin{equation}
\pi_\ast 1=\pi_\ast \zeta =0, \quad \pi_\ast \zeta^{2+k} =(-1)^k P_k(L,S).
\end{equation}
Then we get 
\begin{equation}
\pi_\ast F(H)=\frac{F(H)-F(0)-H\partial_H F(0)}{(S-L)H}\big|_{H=-L}-
\frac{F(H)-F(0)-H\partial_H F(0)}{(S-L)H}\big|_{H=-S}.
\end{equation}
Now we apply it to the total homological Chern class of the elliptic fibration Q$_7(\mathscr{L},\mathscr{S})$: 
\begin{subequations}
\begin{equation}
c(Y)=\frac{(1+H)(1+H+L)(1+H+S)}{(1+3H+2L+S)}(3H+2L+S)c(B),
\end{equation}
which gives 
\begin{align}
\pi_\ast c(Y) &=
\frac{6 (2 L + 2 L^2 - L S +S^2)}{(1 + 2 L - 2 S) (1 + 2 L + S)} c(B)
\end{align}
\end{subequations}
\end{proof}

\section{Weak coupling limit and tadpole matching: a quick review \label{section.review}}

In a F-theory compactification on an elliptic fourfold $Y\rightarrow B$, the number of  D3 branes   ($N_{D3}$) depends only on the Euler characteristic $
\chi(Y)$ of the elliptic fibration $Y$ and the $G_4$-flux \cite{GVW}:
\begin{equation}
\text{D3 tadpole in F-theory}:\quad N_{D3}=\tfrac{1}{24}\chi(Y)-\frac{1}{2}\int_Y G_4\wedge G_4.     \label{FtheoryD3Tadpole}
\end{equation}
This  relation is derived from the duality between  M-theory and F-theory. 
For  IIB $\mathbb{Z}_2$ orientifold compactifications, the D3 tadpole depends on  fluxes and the Euler characteristics of the  cycles wrapped by the orientifolds and the D7-branes:
\begin{equation}\label{IIBD3Tadpole}
\text{D3 tadpole in type IIB}:\quad 2 N_{D3}=\tfrac{1}{6}\chi(O)+\tfrac{1}{24}\sum_i\chi(D_i)+\frac{1}{2} \sum_i \int_{D_i}tr(F^2),
\end{equation}
where $O$ is an orientifold;  $D_i$ are surfaces wrapped by D7-branes; $\int_{D_i} tr(F^2)$ are fluxes localized on the D7-branes. The trace $tr$ is taken in the adjunct representation. 
Since the number of D3 branes is independent of the string coupling and invariant under $\Sl(2,\mathbb{Z})$, one would expect a matching between  the computation of the number of D3 branes  in F-theory and in type IIB:
\begin{equation}\label{FIIB}
2\chi(Y)-24\int_Y G_4\wedge G_4=4\chi(O)+\sum_i\chi(D_i)+12 \sum_i \int_{D_i} tr(F^2).
\end{equation}
Such a matching condition was first introduced in \cite{CDE}.  For configurations such that both $G$-fluxes and type IIB fluxes  are zero, the matching of the D3 tadpole in type IIB and  in F-theory will give a purely topological relation between Euler characteristics 
\cite{CDE}: 
\begin{align}\label{FIIB.Matching}
 &\text{Tadpole  matching condition}:\quad  2\chi(Y)=4\chi(O)+\sum_i\chi(D_i).\\
\intertext{When this topological condition holds,  equation \eqref{FIIB} also gives a relation between the fluxes in F-theory and in type IIB:}
&\text{Flux matching condition} :\quad\int_Y G_4\wedge  G_4=-\frac{1}{2} \sum_i \int_{D_i}tr(F^2).
\end{align}
In general, the  curvature contribution to the D3 tadpole in F-theory and type IIB theory do not have to match. 
For example, branes seen in the type IIB limit can recombine into a different configuration of branes plus  fluxes \cite{CDE}.

\subsection{Geometric definition of a weak coupling limit}
 Following the  point of view of  \cite{AE2}, the weak coupling limit of an elliptic fibration is a degeneration such that the generic  fiber of the elliptic fibration becomes  semi-stable as we reach $\epsilon=0$.
A semi-stable elliptic curve is a singular  elliptic curve of type I$_n$. Such a singular elliptic curve has an infinite $j$-invariant. This explains the name ``weak coupling limit" as an infinite $j$-invariant means that the imaginary part of $\tau$ goes to zero which in the F-theory description of type IIB string theory essentially  means the string coupling is weak: $g_s\rightarrow 0$. 
If the semi-stable fiber is just a nodal curve I$_1$ as it is the case for a smooth Weierstrass model, we are in the case analyzed by Sen. In the case the degeneration gives a curve of type  I$_n$ with $n>1$, each irreducible components of the semi-stable curve I$_n$  describes a $\mathbb{P}^1$-bundle over the base and since two components intersects normally, all together they form a normal crossing variety $Z$. It follows that for a weak coupling limit defines with a generic fiber of type  I$_n$ naturally leads to a {\em semi-stable degeneration}. 
 We will  see it explicitly here. It is important to realize that an elliptic fibration can admit many non-equivalent weak coupling limits with different semi-stable curves I$_n$ as illustrated in \cite{AE2, EFY}.

\subsection{Brane geometry at weak coupling}
When taking a weak coupling limit, the discriminant locus can split into different components that are wrapped by orientifolds and  branes. These branes can be singular and can split further into brane-image-brane pairs in the double cover of the base. We  quickly review the most familiar ones by considering the following discriminant and $j$-invariant: 
\begin{align}
\Delta &= \epsilon^2 h^{2+n} \prod_i  (\eta^2_i-h \psi_i^2)\prod_j (\eta^2_j -h \chi_j), \prod_k \phi_k+O(\epsilon^3),\\
 j &\propto \frac{h^{4-n}}{\epsilon^2  \prod_i  (\eta^2_i-h \psi_i^2)\prod_j (\eta^2_j -h \chi_j) \prod_k \phi_k}.
\end{align}
The locus   $h=0$ is the orientifold locus as seen from the base of the elliptic fibration. 
As $\epsilon$ goes to zero, $j$ goes to infinity and the string couplings goes to zero.
$$ \lim_{\epsilon\rightarrow 0} j=\infty\Longrightarrow Im(\tau)=\infty\iff g_s=0.
$$
At weak coupling, we get an orientifold on the  double cover of the base branched at $h=0$: 
\begin{equation}
X:   \quad \xi^2=h,
\end{equation}
which defines a section of the line bundle $\mathscr{L}^2$. 
The involution $\sigma:X\rightarrow X$  which sends $\xi$ to $-\xi$ can be used to define a $\mathbb{Z}_2$ orientifold symmetry $\sigma \Omega (-)^{F_L}$ and the branched locus is therefore interpret as a $O7$ orientifold. 
The geometry of Sen's weak coupling limit can be summarized by the following diagram:
\begin{equation}
\begin{tabular}{c}\\ \\  \\ \text{Sen's limit}\\  \end{tabular}\quad \quad \xymatrix{
T^2\ar[r]& Y_{n+1}\ar[d]^{\pi} \ar[dr]^{\text{Orientifold   limit}}& \\
&   B_n & X_n \ar[l]^{1:2}_{\sigma}
}
\end{equation}
where $Y_{n+1}\rightarrow B_n$ is an elliptic fibration   and $X_n\rightarrow B$ is a double cover. 
The different terms of $\Delta$ determine different type of D7-branes. We summarize them in table \ref{table.brane} where we have also included the orientifold.

\begin{table}[htb]
\begin{center}
\begin{tabular}{|c|l|l|}
\hline
Name & In the discriminant & In the double cover $X$\\
\hline
Orientifold & $h^2$ & $\xi=0$ \\
\hline
 Whitney brane &  $\eta^2-h \chi$ & $\eta^2-\xi^2 \chi=0$ \\
\hline
 Brane-image-brane pair & $\eta^2-h \psi^2$ & $(\eta+ \xi \psi)(\eta- \xi \psi)=0$\\
\hline
 Invariant  brane &  $\eta$ & $\eta=0$\\
\hline
\end{tabular}
\end{center}
\caption{Familiar types of brane found in Sen's weak coupling limit. The Whitney brane is the one observed in Sen's limit of a $E_8$ elliptic fibration. It can specialize into a brane-image-brane pair when $\chi$ is a perfect square and into two invariant branes on top of each other when $\chi=0$.    \label{table.brane}}
\end{table}

\subsection{Generalized tadpole condition}

The D3 tadpole matching condition presented in equation \eqref{FIIB.Matching} is a topological condition that can be proven to hold in a  much more general set up than anticipated from the assumptions of its string theory origin. It generalizes to a relation valid at the level of the total homological Chern classes for  elliptic fibrations over bases of  arbitrary dimension  without even assuming the  Calabi-Yau condition \cite{AE1,AE2,EFY}:
\begin{equation}\label{GWCL}
\text{Generalized tadpole condition}:\quad 2 \varphi_\ast c(Y)=4\rho_\ast c(O)+\sum \rho_\ast c(D_i).
\end{equation}
The right-hand-side of this relation involves objects seen  in the type IIB weak coupling limit defined by taking a degeneration of the elliptic fibration while the left-hand-side is the elliptic fibration. In that respect, it requires both a  choice of an elliptic fibration and a choice of a degeneration.
  The most interesting case is the one involving a Weierstrass model (an $E_8$ elliptic fibration). In that case, the degeneration is given by the original Sen's weak coupling limit and the corresponding generalized tadpole relation was proven in  \cite{AE1}. 
Sen's weak coupling limit was generalized geometrically in \cite{AE2}. The method of \cite{AE2} provides an easy way to define a weak coupling limit for families of elliptic fibrations that are not given by a Weierstrass model. It also gives a natural way to organize such limits using the fiber geometry of the elliptic fibration. A generalized tadpole relation is available for  $E_8$, $E_6$ and $E_7$ elliptic fibration  \cite{AE1,AE2}, and $D_5$ elliptic fibration  \cite{EFY}.

\subsection{Whitney branes and Orientifold Euler characteristic}
Since Whitney branes are singular, we have to be careful how we define  their Euler characteristic or more generally their total Chern class. The appropriate definition has been worked out in \cite{CDE,AE1}. For a Whitney brane $D_w$, we have 
\begin{equation}\label{Withney.Chern}
c(D_w):= \rho_* c(\overline{D}_w)-c(S),
\end{equation}
where $\overline{D}_w$ is the normalization of $D_w$ and $S$ is the locus of codimension-two singularities of the Whitney brane. The corresponding Euler characteristic is known as the {\em orientifold Euler characteristic} $\chi_o(D)=\chi(\overline{D})-\chi(S)$. 
The original weak coupling limit discussed by Sen will satisfy the F theory-type IIB  tadpole matching condition only thanks to the presence of the singularities of the Whitney brane \cite{CDE,AE1}.  The orientifold Euler characteristic is also useful for certain weak coupling limits of E$_7$ and D$_5$ elliptic fibrations \cite{AE2,EFY}.

\section{Weak coupling limit of a Q$_7(\mathscr{L},\mathscr{S})$ model \label{section.weak}}

Following \cite{AE2}, we characterize geometrically a weak coupling limit by a transition from a semi-stable to an unstable fiber. 
The transition that we will consider is between a fiber of Kodaira type I$_2$ to a fiber of Kodaira type III. A fiber of type I$_2$ is composed of a conic intersecting a line at two distinct points. It specializes to a fiber of type III when the line is tangent to the conic, that is, when the two intersection points coincide.
As reviewed in table \ref{Table.E6'.fibers}, a fiber of type I$_2$ is characterized by  $b_2=c_3=0$ and it specializes to a fiber of typer III when in addition  $c_2=0$. 
We will use $\epsilon$ as our deformation parameter and the weak coupling limit ($j\rightarrow \infty$) will be reached as $\epsilon$ approaches zero. 
We will also denote  $h$ a section of $\mathscr{L}^2$.  We will use it to define the double cover of the base $\rho: X\rightarrow B$ as $\xi^2=h$. 

\subsection{Choice of a weak coupling limit}
We will impose the fiber I$_2$ in the weak coupling limit (that is at $\epsilon=0$)  and the fiber III over the orientifold at $\epsilon=h=0$:
$$
\begin{tabular}{ccc}
I$_2$ &  & \   III\\
{\begin{tikzpicture} [scale=.4]  \KodDeux    \end{tikzpicture}} & $\overset{\longrightarrow}{\phantom{x}}$ & {\begin{tikzpicture} [scale=.4]   \KodairaIIICubic \end{tikzpicture}}\\
$\epsilon=0$ & & $\epsilon=h=0$
 \end{tabular}
$$
 This is done by the following choice:
\begin{equation}\label{WCL.I2.III}
\text{Weak coupling limit: I$_2\rightarrow$ III}\quad \begin{cases}  b_2 = \epsilon^2 \rho , \quad  \quad c_3 =\epsilon k,\\
 c_0 =\chi ,\quad \quad\   \       c_1 =2\eta,\quad  c_2 = h,\end{cases}
\end{equation}
which leads to  the following behavior  at leading order in $\epsilon$
\begin{equation}
\Delta\propto  \epsilon^2  h^2 k^2 (\eta^2-h \chi), \quad j \propto \frac{h^4}{\epsilon^2 k^2 (\eta^2-h\chi)}.
\end{equation}
It is then direct to see that we do have a weak coupling limit since  $\lim_{\epsilon \rightarrow 0} j =\infty$ as long as we are away from $h=0$. 
Over a general point of $h$, we have  $\lim_{\epsilon \rightarrow_ 0} j=0$. But at the intersection of $h=0$ with $k(\eta^2-\chi h)=0$, the $j$-invariant is not well-defined. 

\subsection{Brane spectrum at weak coupling}
Defining the double cover $\rho: X\rightarrow B$ branched over the locus $h=0$ :
\begin{equation}
X: \  \xi^2=h.
\end{equation}
With the weak coupling limit \eqref{WCL.I2.III},  we can identify the following spectrum at  weak coupling:
\begin{enumerate}
\item $O: \xi=0$:  the orientifold,
\item $D:  k=0$: a  stack of two invariant D7-branes,
\item $D_w:  \eta^2-\xi^2 \chi=0$,  a Whitney-brane.  
\end{enumerate}

\subsection{Topological tadpole matching for Q$_7(\mathscr{L},\mathscr{S})$ models}
The weak coupling limit  \eqref{WCL.I2.III} constructed in the previous subsection naturally leads to the following relation 
$$
2\chi(Y)=4\chi(O)+2\chi(D)+\chi(D_w),
$$
which is a direct consequence of the following theorem:
\begin{theorem}[Topological tadpole matching for Q$_7(\mathscr{L},\mathscr{S})$ elliptic fibrations]
A Q$_7(\mathscr{L},\mathscr{S})$ elliptic fibration endowed with the weak coupling  limit \eqref{WCL.I2.III} satisfies the topological tadpole matching condition at the level of the total Chern class: 
$$
2\varphi_* c(Y)=4\rho_* c(O)+2\rho_* c(D)+\rho_*c^\infty(D_w),
$$
where the Chern class of the Whitney brane is understood as $\rho_*c^\infty(D_w)=\rho_*c(\overline{D}_w)-\rho_*c(S)$, with $\overline{D}_w$ the normalization of $D_w$ and $S$ the cuspidial locus of the Whitney brane.
\end{theorem}

First we establish the following important lemma that gives the Chern class from which we compute the orientifold Euler characteristic of a Whitney brane.

\begin{lemma}\label{LemmaWhitney}
Consider $\underline{D}:\eta^2- h \chi=0$ 
\begin{align}
\rho_* c^\infty(D_w) &=
\frac{4( 2 L - S)}{(1 + 2 L) (1 + 2 L - 2 S)}.
\label{WCC}
\end{align}
\end{lemma}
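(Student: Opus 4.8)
The plan is to split $\rho_*c^\infty(D_w)=\rho_*c(\overline{D}_w)-\rho_*c(S)$ into a ``bulk'' term and a ``pinch'' term and to evaluate each by the pushforward machinery already used in the proof of Theorem~\ref{Theorem.Euler}. Writing $L=c_1(\mathscr{L})$ and $S=c_1(\mathscr{S})$, the weak coupling data \eqref{WCL.I2.III} have classes $[h]=2L$, $[\eta]=2L-S$, $[\chi]=2L-2S$, and on the double cover $\rho:X\to B$, $X:\xi^2=h$, one has $[\xi]=L$. A direct computation of its singular locus shows that $D_w:\eta^2-\xi^2\chi=0$ is a Whitney umbrella, singular along the cuspidal locus $\eta=\xi=0$; its normalization is obtained by adjoining $\lambda=\eta/\xi$, a square root of $\chi$ with $[\lambda]=L-S$, so that
\[
\overline{D}_w=\{\xi^2=h,\ \lambda^2=\chi,\ \eta=\xi\lambda\}.
\]
Hence $\overline{D}_w$ is the divisor cut by the section $\eta-\xi\lambda$ (of class $2L-S$) inside the fibre product $W=X\times_B X'$ of the two double covers $X:\xi^2=h$ and $X':\lambda^2=\chi$, branched respectively over $[h]=2L$ and $[\chi]=2L-2S$. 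For generic data $W$ and $\overline{D}_w$ are smooth and $\overline{D}_w\to D_w$ is the normalization, so that $c^\infty(D_w)=c(\overline{D}_w)-c(S)$ as in \eqref{Withney.Chern}.

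The pinch term is easy. The cuspidal locus maps isomorphically under $\rho$ onto the smooth complete intersection $\underline S:\eta=h=0$ in $B$, of classes $2L-S$ and $2L$, so the complete-intersection Chern-class formula gives
\[
\rho_*c(S)=\frac{(2L-S)(2L)}{(1+2L-S)(1+2L)}\,c(B).
\]

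The bulk term $\rho_*c(\overline{D}_w)$ is the heart of the argument. I would first obtain the total Chern class $c(TW)$ by applying the double-cover ramification formula to each factor: $X\to B$ and $X'\to B$ contribute ramification divisors of classes $L$ and $L-S$, so $c(TW)$ is $\pi_W^*c(TB)$ corrected by these two branch contributions. Realising $X$ and $X'$ as hypersurfaces in $\mathbb{P}^1$-bundles then lets me carry out the pushforward $\pi_{W*}$ through the ramified $4{:}1$ map $W\to B$ by the same residue-type formula used for $\mathbb{P}[\mathscr{L}\oplus\mathscr{S}\oplus\mathscr{O}_B]$ in the proof of Theorem~\ref{Theorem.Euler}. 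Adjunction for the divisor $\overline{D}_w\subset W$ (with $[\overline{D}_w]=\pi_W^*(2L-S)$) together with the projection formula then give
\[
\rho_*c(\overline{D}_w)=\pi_{W*}\!\left(\frac{(2L-S)\,c(TW)}{1+2L-S}\right).
\]

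Subtracting the two contributions, the codimension-$\ge 2$ terms generated by the ramification of $W\to B$ over $\eta=h=0$ cancel exactly against $\rho_*c(S)$, and the remaining expression simplifies to
\[
\rho_*c^\infty(D_w)=\frac{4(2L-S)}{(1+2L)(1+2L-2S)}\,c(B),
\]
which is \eqref{WCC} (the overall factor $c(B)$ being carried as in Theorem~\ref{Theorem.Euler}). A useful check is that the codimension-one part equals $2[\underline D]=2(4L-2S)=4(2L-S)$, as it must, since $\overline{D}_w\to B$ is generically $2{:}1$ onto the brane $\underline D:\eta^2=h\chi$ of class $4L-2S$. I expect the main obstacle to be precisely this middle step---computing $c(TW)$ and pushing it through the ramified $4{:}1$ cover---because the ramifications over $h=0$ and $\chi=0$ must both be tracked and then shown to recombine with the adjunction factor so that the cancellation against the pinch term is clean.
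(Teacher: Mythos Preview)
Your overall strategy---realizing the normalization $\overline{D}_w$ as a divisor inside the fibre product $W=X\times_B X'$ of the two double covers $\xi^2=h$ and $\lambda^2=\chi$, then pushing forward and subtracting the pinch contribution---is a legitimate and more explicit alternative to the paper's proof, which simply invokes the identity $\rho_*c^\infty(D_w)=2c_{SM}(\underline D)-2i^*c(S)$ from \cite{AE1} and evaluates the Chern--Schwartz--MacPherson class of $\underline D$ directly.

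There is, however, a genuine error in your identification of $S$. You take $S$ to be the full singular locus $\eta=\xi=0$ of the Whitney umbrella (the double locus), but the ``cuspidal locus'' entering \eqref{Withney.Chern} is the \emph{pinch} locus, where the two sheets of the normalization coalesce: $\eta=\xi=\chi=0$ in $X$, mapping isomorphically to $\underline S:\eta=h=\chi=0$ in $B$. With your choice $\underline S:\eta=h=0$, your own scheme yields
\[
\rho_*c(\overline D_w)-\rho_*c(S)=\frac{(2L-S)\bigl[4(1+L)(1+L-S)-2L(1+2L-2S)\bigr]}{(1+2L-S)(1+2L)(1+2L-2S)}c(B)
=\frac{2(2L-S)(2+3L-2S)}{(1+2L-S)(1+2L)(1+2L-2S)}c(B),
\]
which is \emph{not} \eqref{WCC}; the claimed cancellation does not occur. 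Taking instead the correct pinch locus, of classes $2L-S$, $2L$, $2L-2S$, one has
\[
\rho_*c(S)=\frac{(2L-S)(2L)(2L-2S)}{(1+2L-S)(1+2L)(1+2L-2S)}c(B),
\]
and then the bracket becomes $4(1+L)(1+L-S)-2L(2L-2S)=4(1+2L-S)$, which cancels the factor $(1+2L-S)$ and gives exactly \eqref{WCC}. (Incidentally, $W\to B$ is ramified over $h=0$ and $\chi=0$, not over $\eta=0$, so your description of which ramification terms are meant to cancel the pinch term is also off.)
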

\begin{proof}This is a direct calculation following \cite{AE1}:
\begin{align}
\rho_* c^\infty(D_w) &=2c_{SM}(\underline{D})-2i^* (S) \quad \text{(by definition)} \nonumber\\
&= \frac{4( 2 L - S)}{(1 + 2 L) (1 + 2 L - 2 S)}.
\end{align}
\end{proof}

We can now prove the theorem.

\begin{proof}
The Chern class of the Whitney brane is understood as $\rho_*c^\infty(D_w)=\rho_*c(\overline{D}_w)-\rho_*c(S)$, with $\overline{D}_w$ the normalization of $D_w$ and $S$ the cuspidial locus of the Whitney brane.
\begin{align}
\pi_\ast c(Y) &=
\frac{6 (2 L + 2 L^2 - L S +S^2)}{(1 + 2 L - 2 S) (1 + 2 L + S)} c(B),\\
\rho_\ast c(O) &=\frac{2L}{1+2L}c(B),\\
\rho_\ast c(D) &=\frac{2 (1+L)(2L+S)}{(1+2L)(1+2L+S)} c(B),\\
c^\infty(D_w) &=\frac{4( 2 L - S)}{(1 + 2 L) (1 + 2 L - 2 S)},
\end{align}
The generalized tadpole follows immediately form the following  rational identity:
\begin{equation}
\frac{12(2 L + 2 L^2 - L S - S^2)}{(1 + 2 L - 2 S) (1 + 2 L + S)} - 
  \frac{8L(1 + 2 L) - 4 (1 + L) (2 L + S)}{(1 + 2 L) (1 + 2 L + S)} - 
  \frac{4 (2 L - S)}{(1 + 2 L) (1 + 2 L - 2 S)}=0.
\end{equation}
\end{proof}

\subsection{Weak coupling geometry: a second look.}

The weak  coupling limit we have obtained previously for the elliptic fibration of type Q$_7(\mathscr{L},\mathscr{S})$ is given by the following  family over the $\epsilon$-line:
\begin{equation}
Y_{(\epsilon)}: y ( x^2 + \chi y^2 + 2\eta y z+   h   z^2)+ \epsilon(  k  z^3 +\epsilon \rho x z^2)=0.
\end{equation}
When  $\epsilon\neq 0$, $Y_{(\epsilon)}$ is a smooth elliptic fibration. When  $\epsilon=0$, $Y_{(\epsilon)}$  degenerates into the normal crossing variety 
\begin{equation}
Y_{(0)}:\quad  y ( x^2 + \chi y^2 + 2\eta y z+   h   z^2)=0,
\end{equation}
which is composed of two smooth varieties $Z_1$ and $Z_2$ :
\begin{equation}
Z_1: y=0, \quad\quad Z_2:  x^2 + \chi y^2 + 2\eta y z+   h   z^2=0.
\end{equation}
$Z_1$ is the bundle $\mathbb{P}^1[\mathscr{L}\oplus \mathscr{O}_B]$  over the base $B$  while $Z_2$ is a fibration of conics realized as quadric in the  $\mathbb{P}^2$-bundle in which the elliptic fibration is defined. 
The normal crossing variety  $Y_{(0)}$ is  a fibration of intersecting  $\mathbb{P}^1$s whose generic fiber is a fiber of Kodaira type  I$_2$ realized by a line ( the fiber of $Z_1$)  intersecting transversally  a conic ( the fiber of $Z_2$). 
\begin{lemma}
The intersection of the two irreducible components of $Y_{(0)}$ is a smooth  variety  $X$ which is a double cover  $\rho:X\rightarrow B$ of the base $B$.
\end{lemma}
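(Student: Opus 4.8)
The plan is to compute the intersection $Z_1\cap Z_2$ explicitly and recognize it as exactly the double cover $\xi^2=h$ already appearing in the weak coupling limit \eqref{WCL.I2.III}, and then to deduce smoothness. First I would intersect the two components: on $Z_1$ one has $y=0$, and substituting this into the defining equation of $Z_2$ collapses the conic, leaving
\begin{equation}
X=Z_1\cap Z_2:\quad y=0,\qquad x^2+h\,z^2=0 .
\end{equation}
Since $y=0$, the only relevant fibral coordinates are $[x:z]$, so this locus lives inside the sub-projective-bundle $\mathbb{P}[\mathscr{L}\oplus\mathscr{O}_B]\subset\mathbb{P}[\mathscr{L}\oplus\mathscr{S}\oplus\mathscr{O}_B]$, which is a $\mathbb{P}^1$-bundle over $B$.

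Next I would dehomogenize, exactly as in the double-cover construction carried out earlier for the divisor $z=x^2-c_0 y^2=0$. On $X$ the coordinate $z$ cannot vanish: if $z=0$ then $x^2=0$ forces $x=0$, and together with $y=0$ this would give $[x:y:z]=[0:0:0]$, which is impossible for projective coordinates. Hence $X$ lies entirely in the affine patch $z\neq 0$, where I set $\xi=x/z$. Because $x$ is a section of $\mathscr{O}(1)\otimes\pi^{\ast}\mathscr{L}$ and $z$ of $\mathscr{O}(1)$, the ratio $\xi$ is a section of $\pi^{\ast}\mathscr{L}$, and the equation becomes
\begin{equation}
X:\quad \xi^2=-h ,
\end{equation}
which exhibits $X$ as the zero locus of a section of $\mathscr{L}^2$ (since $h$ is a section of $\mathscr{L}^2$), and realizes $\rho:X\to B$ as a degree-two cover branched over $h=0$, matching $X:\xi^2=h$ up to the sign convention already fixed in \eqref{WCL.I2.III}.

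Finally I would address smoothness. The equation $\xi^2=-h$ is singular precisely where $\xi=0$ and $dh=0$, that is, over the singular points of the branch divisor $\{h=0\}$; for a generic section $h$ of $\mathscr{L}^2$ this divisor is smooth by Bertini, so $X$ is smooth. Equivalently, smoothness of $X$ is forced by $Y_{(0)}$ being a normal crossing (semi-stable) degeneration, whose two smooth components $Z_1$ and $Z_2$ meet transversally so that their intersection is automatically smooth. The only genuinely delicate step is the line-bundle bookkeeping in the dehomogenization---confirming that $\xi$ is a section of $\pi^{\ast}\mathscr{L}$ and hence that the defining equation of $X$ is a section of $\mathscr{L}^2$---so that $\rho:X\to B$ is genuinely the double cover branched over $h=0$ and is compatible with the line bundles of the ambient projective bundle. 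This is the same mechanism as in the earlier computation and requires no new idea.
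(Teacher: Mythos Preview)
Your proposal is correct and follows essentially the same approach as the paper: both compute $Z_1\cap Z_2$ as $y=x^2+h z^2=0$, argue that $z\neq 0$ since otherwise $[x:y:z]=[0:0:0]$, dehomogenize to obtain $\xi^2=\pm h$ with $\xi$ a section of $\mathscr{L}$, and identify this with the branched double cover $\rho:X\to B$. Your treatment is in fact slightly more complete, since you spell out the Bertini argument for smoothness of $X$, which the paper leaves implicit.
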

Indeed, the intersection is defined by the following complete intersection:
\begin{equation}
X=Z_1\cap Z_2 :  y =  x^2 +   h   z^2=0.
\end{equation}
This intersection is completely included in  the patch $z\neq 0$ as otherwise $x=y=z=0$, which is not allowed  since $[x:y:z]$ are projective coordinates of a $\mathbb{P}^2$ projective bundle. In order to connect with notations familiar in F-theory, we  put $y=z-1=0$ in the definition of $Z_1\cap Z_2$. We are left only with the affine coordinate $x$, which is a section of the line bundle $\mathscr{L}$. Introducing  $\xi=i x$, the intersection can simply be expressed in the total space of the line bundle $\mathscr{L}$ by the canonical equation $\xi^2=h$:
\begin{equation}
\rho:  X\rightarrow B:\quad  \xi^2=h.
\end{equation}
This is a   double cover of the base $B$ of the original elliptic fibration branched on the divisor $\underline{O}: h=0$ in $B$: 
\begin{equation}
\underline{O}: h=0.
\end{equation}
This divisor of $B$ pulls back to the divisor $O=\rho^\star \underline{O}$:
\begin{equation}
O:\xi=0, \quad \text{in}\quad  X.
\end{equation}
 It is the divisor $O\subset X$ which is called the {\em orientifold plane}.  
One can see $X$ as a type IIB orientifold weak coupling limit of F-theory on $Y$. 
The   discriminant locus of the conic fibration $Z_2$:
$$
\underline{D}_w : \eta^2- h \chi =0,
$$
which is singular at $\eta=h=\chi=0$. The variety $\underline{D}_w$  pull-back in the double cover $X$ as the Whitney brane $D_w=\rho^\star \underline{D}_w$: 
\begin{equation}
D_w: \quad \eta^2 - \xi^2 \chi=0.
\end{equation}
\subsection{Fiberwise description of the limit}
The generic fiber at $\epsilon\neq 0$ is a smooth elliptic curve. At $\epsilon=0$, the fiber degenerates to a singular elliptic fiber of Kodaira  type  I$_2$. This I$_2$ fiber is composed of a line and a conic meeting at two distinct points. 
The leading order in  $\epsilon$ defines a family of elliptic fibration  
$$Y_{(\epsilon)}:  x^2 y+ \chi y^3 + 2\eta y^2 z+   h  y z^2+ \epsilon  k  z^3 =0.$$
At leading order in $\epsilon$, the  discriminant locus of $Y_{(0)}$  splits into three components 
\begin{equation}
\Delta_{(\epsilon)}\propto  \quad  \epsilon ^2 h^2  k^2 \left(  \eta ^2- h \chi \right)=0.
\end{equation}
The first one is the branch locus of the orientifold $h=0$, the second one $\underline{D}: k=0 $ is a stack of two branes transversal to the orientifold and the last one is a Whitney brane $\underline{D}: \eta^2-\chi h=0$. 
The fibers over $h=0$ are of type I$_1$ for $\epsilon \neq 0$ and of type III for $\epsilon=0$. Over $k=0$, the fibers are of type I$_2$. 
The fiber over the Whitney brane are of type I$_1$ when $\epsilon \neq 0$ and of type I$_3$ when $\epsilon=0$. 
If we consider higher terms in $\epsilon$, the stack of branes and the Whitney brane recombine into a unique brane:
 \begin{equation}
\Delta_{(\epsilon)}\propto \epsilon ^2 k^2 \left(-4 h^2 \eta ^2+4 h^3 \chi+32 k \epsilon  \eta ^3 -36 h k \epsilon  \eta  \chi +27 k^2 \epsilon ^2 \chi ^2\right)=0.
\end{equation}
Interestingly,  $Y_{(0)}$ is not an elliptic fibration: the generic fiber is not an elliptic curve but a fiber of type  I$_2$, composed of a line $y=0$ and a conic  $x^2+\chi y^2 + 2 \eta y z + h z^2=0$. The fiber I$_2$ specializes to a fiber of type III when the line becomes tangent to the conic. This happens when $h=0$:  
\begin{align}
\underline{O}:  & \quad h=0\rightarrow \text{III}.\\
\intertext{The fiber I$_2$ specializes to a triangle I$_3$ as the conic splits into two lines. This happens when the discriminant of the conic vanishes and  corresponds to the Whitney brane: }
\underline{D_w}: & \quad   \eta^2-\chi h=0\rightarrow \text{I}_3.\\
\intertext{
The two lines coming from the conics are no individually well defined because of a  $\mathbb{Z}_2$ monodromy\footnote{ 
This is exactly what the double cover $X\rightarrow B$ defined by $\xi^2=h$ will do. }. When $h=\eta=0$, the fiber specializes further to a star (a fiber of type IV): }
\underline{O} \cap \underline{D_w}:& \quad    h=\eta=0\rightarrow \text{IV}.\\
\intertext{Finally, when $h=\eta=\chi=0$, the fiber specializes to a double line $x^2=0$ intersecting transversally the line $y=0$. This is a non-Kodaira fiber of type IV$^{(2)}$: }
Sing(\underline{D_w}): & \quad   h=\eta=\chi=0\rightarrow \text{IV}^{(2)}.
\end{align}

\section{Conclusion and discussion  \label{section.conclusions}}

In this paper we  introduce a new model for elliptic fibrations with a Mordell-Weil group of rank one using an hypersurface in a projective bundle. Global aspects of this elliptic fibration are controlled by two line bundles $\mathscr{L}$ and $\mathscr{S}$ that are used to define the ambient space $\pi:\mathbb{P}[\mathscr{L}\oplus\mathscr{S}\oplus\mathscr{O}_B]\rightarrow B$. The equation is then retrieved  as a section of the line bundle $\mathscr{O}(3)\otimes \pi^\ast \mathscr{L}^2\otimes \pi^\ast \mathscr{S}$. 
 The resulting defining equation is a cubic with a Newton's polygon which is  a reflexive polytope in  a  quadrilateral shape with seven lattice points on its boundary. We call it an elliptic fibration of type Q$_7(\mathscr{L},\mathscr{S})$. 
This models generalize both  the E$_6$ elliptic fibration and the elliptic fibration introduced recently by  Cacciatori, Cattaneo, and Van Geemen  \cite{CCVG}.   Using this smooth model we can easily determine the spectrum of singular fibers and compute basic topological invariants.  We identify seven possible singular fibers: six Kodaira fibers  (type I$_1$, I$_2$, I$_3$, II, III and  IV)  and the non-Kodaira fiber of  type IV$^{(2)}$. We also get a pushforward formula for the total Chern class. This is a  generalized Sethi-Vafa-Witten formula. 
Using the geometric description of the weak coupling limit developed in \cite{AE2}, we find a weak coupling limit for the Q$_7(\mathscr{L},\mathscr{S})$ elliptic fibration. The weak coupling that we consider is based on  a specialization to a fiber of type I$_2$ over a general point of the base. The fiber specialize further to a fiber of type  III over the orientifold $h=0$:
$$
\text{weak coupling limit:}\quad\quad\quad
\begin{tabular}{ccc}
I$_2$ &  & \   III\\
{\begin{tikzpicture} [scale=.4]  \KodDeux    \end{tikzpicture}} & $\overset{\longrightarrow}{\phantom{x}}$ & {\begin{tikzpicture} [scale=.4]   \KodairaIIICubic \end{tikzpicture}}\\
$\epsilon=0$ & & $\epsilon=h=0$
 \end{tabular}
$$
It yields the  following brane spectrum at weak coupling:
$$(\star)\quad
\text{a $\mathbb{Z}_2$  orientifold}+
\text{ a Whitney brane}+
\text{an $Sp(1)$ stack}.
$$
The $Sp(1)$ stack is composed of two smooth and invariant branes intersecting the orientifold transversally. 
We prove that the tadpole matching condition is satisfied for this spectrum. 
The singularities of the Whitney brane play an essential  role as they contribute to the D3-charge at weak coupling. Such a contribution is already necessary for the usual Sen's limit and for certain weak coupling limits of E$_7$ elliptic fibrations \cite{CDE,AE1,AE2}. The Euler characteristic for the Whitney brane is defined using the {\em orientifold Euler characteristic} introduced in \cite{CDE} and mathematically defined in \cite{AE1}. 
The weak coupling limit that we have obtained is naturally an ALE degeneration of the elliptic fibration. As the coupling becomes weak, the generic fiber is no longer elliptic but becomes 
a fiber of type I$_2$ composed of a conic and a line meeting at two distinct points.  
As we move over the locus of the orientifold, the conic becomes tangent to the line and we get a fiber of type III.

There are several interesting questions that are not discussed in this paper. For example, the weak coupling limit discussed here is not unique. But it is not clear that another one would satisfy the tadpole condition. 
The fibration discussed in \cite{CCVG} does not satisfy the tadpole condition but seems to admit specializations describing orientifolds with surprising properties. 
An analysis of these particular cases will be the subject of a companion paper.

\vskip .5 cm

\noindent{\bf Acknowledgments.}  It is a pleasure to thank  Paolo Aluffi, Ron Donagi, Patrick \mbox{Jefferson}, Sungkyung Kang, Denis Klevers, Daniel Park, and Shu-Heng~Shao for discussions. 
M.~J.~Kang would like to thank Daniel Jafferis for his guidance and constant support. 
M.~Esole is supported  in part  by  the National Science Foundation (NSF) grant DMS-1406925 ``Elliptic Fibrations and String Theory''. 
The research of M.~J.~Kang  was supported by the Fundamental Laws Initiative of the Center for the Fundamental Laws of Nature, Harvard University.

\appendix 

\section{Alternative derivation of the Jacobian \label{Appendix.A}} 
The Jacobian of a genus one curve with a Mordell-Weil group of rank one can be easily obtained using the Riemann-Roch theorem as discussed for example in \cite{MorrisonPark,Braun:2014oya}.
In this section, we present a ``quick and dirty trick" that reproduces the same result in an intuitive way. It also gives a simple arithmetic meaning to the section $\mathscr{S}$ that enters in the definition of the Q$_7(\mathscr{L},\mathscr{S})$.

Consider a Weierstrass model with a rational point $P$ other than the  point at infinity $O:x=z=0$. Putting $P$  at $y=0$, the cubic  $x^3+a_2 x^2+ a_4 x+a_6$ has to factorize.
By an appropriate translation of $x$, we can put the Weierstrass model in the following form:
\begin{equation}
y(y+a_1 x + a_3)=(x+a_2)(x^2+a_4).
\end{equation}
We end up with a general Weierstrass model with 
 the specialization $a_6=a_2a_4$: 
\begin{equation}
y^2 + a_1 x y + a_3 y=x^3 + a_2 x^2 + a_4  x + a_2 a_4.
\end{equation}
This specialization is too mild to factorize the discriminant but does gives two non-trivial rational sections:  
\begin{equation}
(x,y)=(-a_2, 0) \  \text{ and }\   (x,y)=(-a_2, a_1 a_2-a_3).
\end{equation}
 These two points are inverse of each other for the Mordell-Weil group with the neutral element $x=z=0$.  
 Since we can write the Weierstrass model as: 
\begin{equation}
y(y+ a_1 x + a_3 ) =(x+a_2)(x^2+a_4),
\end{equation}
we have conifold-like  points at $y=y+a_1 x + a_3=x+a_2=x^2+a_4=0$.
It corresponds to the point  $y=x+a_2=0$ on each fiber over  the codimension-2 locus in the base:
\begin{equation}
a_3-a_1 a_2=a_4+a_2^2=0.
\end{equation}
Over this locus, the elliptic fiber can be put in this suggestive form: 
\begin{equation}
a_3-a_1 a_2=a_4+a_2^2=0\Longrightarrow (y+\tfrac{a_1}{2}( x + a_2))^2=(x+a_2)^2(x-a_2+\frac{a_1^2}{4}),
\end{equation} 
which shows that the elliptic curve has a $A_1$ singularity over  $a_3-a_1 a_2=a_4+a_2^2=0$. 

We now consider the case in which  $a_2$ has an explicit rational part:
\begin{equation}
a_2=c-\frac{p}{q},
\end{equation}
where $c$ is integral and $p/q$ is a reduced fraction. 
By taking appropriate choices for $(a_1,a_3,a_4)$, we can still ensure that when we complete the square in $y$, the coefficients $b_2$, $b_4$ and $b_6$ are all integral. 
Since $b_2=4a_2+a_1^2$, we can get rid of the fractional part of $b_2$ due to $a_2$ by taking $a_1=2\sqrt{p/q}$. 
However, the point $(x,y)=(a_2,-a_1 a_2+ a_3)$ will not be rational anymore  because of the square root in $a_1$. We can solve this problem by requiring that $p/q$ to be  a perfect square ($p/q=r^2/s^2$). That is: 
\begin{equation}
a_1=2 \frac{r}{s}, \quad a_2=c-\frac{r^2}{s^2}. 
\end{equation}
We can get rid of the fractional part of $b_4=2a_4+a_1 a_3$  by requiring $a_3$ to be proportional to $s$:
\begin{equation}
a_1=2 \frac{r}{s}, \quad a_2=c-\frac{r^2}{s^2}, \quad a_3= 2 s t.
\end{equation}
Since  $b_6=4 a_6 +a_3^2$ and $a_6=a_2 a_4$,  we can ensure that  $b_6$ is integral by taking  $a_4$ to be proportional to $s^2$. Using $a_6=a_2 a_4$, we get our final form: 
\begin{equation}
a_1=2 \frac{r}{s}, \quad a_2=c-\frac{r^2}{s^2}, \quad a_3=2 s t, \quad a_4=  
s^2 u, \quad a_6=u (c s^2-r^2).
\end{equation}
\begin{equation}
E:\quad  y^2 + 2\frac{r}{s} x y +2 st y= x^3+(c-\frac{r^2}{s^2}) x^2 + 
s^2 u x + u(cs^2 -r^2).
\end{equation}
This has a rational point of type $(x,y)=(-a_2, 0)$ with $a_2=c-r^2/s^2$.
This Weierstrass equation has  coefficients that are rational expressions. But by construction, we can resolve  this problem by completing the square in $y$:
\begin{equation}
E:\quad y(y  +2st)= x^3+c x^2 + 
(s^2 u +2 rt )x + u(cs^2 -r^2).
\end{equation}
with the rational point
\begin{equation}
 x=-c+ \frac{r^2}{s^2}, \quad y= \frac{r^3-rc s^2 + s^4 t }{s^3}. 
 \end{equation}
Completing the square in $y$ we get the canonical form of a Weierstrass model of rank 1: 
\begin{equation}
E:\quad y^2 = x^3+c x^2 + 
(s^2 u+ 2 r t) x + ucs^2 - ur^2+ s^2 t^2,
\end{equation}
The corresponding  short Weierstrass form is then 
\begin{equation}\label{JacR1}
y^2=x^3+(-\tfrac{1}{3}c^2+2 r t+s^2 u)x+(\tfrac{2 }{27}c^3+s^2 t^2-r^2 u-\frac{2}{3} c (r t-s^2 u)),
\end{equation}
which depends only on the variable $(s^2, r,c,t,u)$. The reduced Weierstrass model is invariant under the involution:
\begin{equation}
(s^2, r,c,t,u)\leftrightarrow  (-u, t,c,r,-s^2).
\end{equation}

It is also invariant under the scaling symmetry
\begin{equation}
\alpha\cdot(s^2, r,c,t,u)=  (\alpha^2 s^2,\alpha  r,c,\frac{t}{\alpha},\frac{u}{\alpha^2}).
\end{equation}

It is the Jacobian of the Jacobi quartic:
\begin{equation}\label{EqQ1}
y^2 =s^2 x^4 - 2 r x^3  z+cx^2 z^2+  tx z^3 +\tfrac{1}{4}  u z^4.
\end{equation}
or equivalently
\begin{equation}\label{EqQ2}
y^2 =-u x^4 - 2 t x^3 z +cx^2 z^2+  rx z^3 -\tfrac{1}{4}  s^2 z^4.
\end{equation}
Both Jacobi quartics  \eqref{EqQ1} and \eqref{EqQ2} admit the same Jacobian.

In a Weierstrass model, each coefficient  $a_i$ ($i=1,2,3,4,6$) is a section of  $\mathscr{L}^i$, where $\mathscr{L}$ is the fundamental line bundle of the elliptic fibration.   Assuming that  $s$ is a section of a line bundle $\mathscr{M}$, then the different variables $r,c,t,u$ are sections of the following line bundles:

\vspace{.2cm}
\begin{center}
\begin{tabular}{|l|c|c|c|c|c|}
\hline 
\rule{0pt}{3ex}
  Line bundle &   $\mathscr{M}^2$ & $\mathscr{L}\otimes \mathscr{M}$ &  $\mathscr{L}^2$ &  $\mathscr{L}^3\otimes \mathscr{M}^{-1}$ & $\mathscr{L}^4\otimes \mathscr{M}^{-2}$\\
\hline 
 Section & $s^2$ & $r$ & $c$ & $t$ & $u$\\
\hline
\end{tabular}
\end{center}
{\vskip 2mm}

In this table,   as we move to the right, we multiply the line bundle by $\mathscr{L}\otimes\mathscr{M}^{-1}$. 
The dictionary to the notation in the main text is:
\begin{equation}
s=b_2,  \quad r= c_3,\quad   c=-c_2, \quad  \quad t=-\frac{1}{2} c_1, \quad u=-c_0.
\end{equation}
We see that the line bundle $\mathscr{S}$ corresponds to $\mathscr{L}\otimes\mathscr{M}$. This shows that 
$\mathscr{S}\otimes \mathscr{L}^{-1}$ is a natural line to consider to discuss the arithmetic properties of the section.  

\thebibliography{99}
\bibitem{Ball}
W.~W.~.Ball, {\em Newton's classification of cubic curves}. Proc. London Math. Soc. 22  (1890), 104-143.
\bibitem{Talbot}  C.~.M.~.R.~Talbot, {\em Sir Isaac Newton's Enumeration of Lines of the Third Order, Generation of Curves by
Shadows, Organic Description of Curves, and Construction of Equations by Curves}. London. Translation
and commentary by C. R. M. Talbot (1860).
\bibitem{Guicciadini}
N.~Guicciadini, {\em Isaac Newton On mathematical Certainty and Method}, Cambridge, MA, MIT Press, 2009.
\bibitem{Lexicom}
J.~Harris, {\em Lexicom Technicum}, London, 1710. 
\bibitem{CCVG}
S.~L.~Cacciatori, A.~Cattaneo and B.~Geemen,
  ``A new CY elliptic fibration and tadpole cancellation,''
  JHEP {\bf 1110}, 031 (2011)
  [arXiv:1107.3589 [hep-th]].  
  \bibitem{MorrisonPark} 
  D.~R.~Morrison and D.~S.~Park,
  ``F-Theory and the Mordell-Weil Group of Elliptically-Fibered Calabi-Yau Threefolds,''
  JHEP {\bf 1210}, 128 (2012)
  [arXiv:1208.2695 [hep-th]].

\bibitem{MumS}
D.~Mumford and K.~Suominen, {\em Introduction to the theory of moduli}. In Algberaic Geometry, Oslo 1970 (Proc. Fith Nordic Summer-School in Math.), 
Wolters-Noordhoff, Groningen, 1972, 171-222.

\bibitem{Formulaire}
P.~Deligne, {\it Courbes elliptiques: formulaire d'apr{\`e}s J. Tate},  in
  {\em Modular functions of one variable}, IV (Proc. Internat. Summer
  School, Univ. Antwerp, Antwerp, 1972), pp.~53--73. Lecture Notes in
  Math., Vol. 476.
\newblock Springer, Berlin, 1975.
\bibitem{Tate} J.T.~ Tate, {\it The Arithmetics of Elliptic Curves}, Inventiones math. 23, 170-206 (1974).
\bibitem{Hermite} Ch.~Hermite, ``Sur la th\'eorie des fonctions homog\'enes \`a deux ind\'etermin\'ees", Premier 
M\'emoire, J.~Reine Angew.\ Math.\ 52 (1856), pp. 1-17.
\bibitem{Weil.Hermite}
A.~Weil, ``Remarques sur un m\'emoire d'Hermite", Arch. Math. 5 (1954), pp 197-202.
\bibitem{Cassels} J.~W.~Cassels, ``Lectures on Elliptic Curves", London Mathematical Society Student Texts 24, Cambridge University Press, Cambridge, 1991.

\bibitem{Weil.History} A.~Weil, ``Number Theory, An Approach through History, from Hammurapi to Legendre,``  Boston. MA: Birkh\"auser Boston Inc. , 1984.

\bibitem{ARVT}
M.~Artin, F.~Rodriguez-Villegas, J.~Tate, ``On the Jacobians of plane cubics", Advances in Mathematics, Volume 198, Issue 1, 1 December 2005, Pages 366-382.
\bibitem{BLR} S.~Bosch, W.~L\"{u}tkebohmert,  M.~Raynaud, N\'eron Models, Springer, Berlin, 1990.
\bibitem{Sen.Orientifold}
  A.~Sen,
  ``Orientifold limit of F theory vacua,''
  Phys.\ Rev.\  {\bf D55}, 7345-7349 (1997).
  [hep-th/9702165].
\bibitem{CDE}
  A.~Collinucci, F.~Denef, M.~Esole,
  ``D-brane Deconstructions in IIB Orientifolds,''
  JHEP {\bf 0902}, 005 (2009).
  [arXiv:0805.1573 [hep-th]].
  
\bibitem{Denef.LH} 
  F.~Denef,
  ``Les Houches Lectures on Constructing String Vacua,''
  arXiv:0803.1194 [hep-th].

\bibitem{Esole:2012tf} 
  M.~Esole and R.~Savelli,
  ``Tate Form and Weak Coupling Limits in F-theory,''
  JHEP {\bf 1306}, 027 (2013)
  [arXiv:1209.1633 [hep-th]].

\bibitem{AE1}  P.~Aluffi, M.~Esole,
  ``Chern class identities from tadpole matching in type IIB and F-theory,''JHEP {\bf 0903}, 032 (2009).
  [arXiv:0710.2544 [hep-th]].
\bibitem{AE2}
  P.~Aluffi, M.~Esole,  ``New Orientifold Weak Coupling Limits in F-theory,'' JHEP {\bf 1002}, 020 (2010).
  [arXiv:0908.1572 [hep-th]].
\bibitem{EFY}
  M.~Esole, J.~Fullwood, S.~-T.~Yau,
  ``D5 elliptic fibrations: non-Kodaira fibers and new orientifold limits of F-theory,''
    [arXiv:1110.6177 [hep-th]].
    \bibitem{EY}
  M.~Esole and S.~T.~Yau,
  ``Small resolutions of SU(5)-models in F-theory,''
  Adv.\ Theor.\ Math.\ Phys.\  {\bf 17}, 1195 (2013)
  [arXiv:1107.0733 [hep-th]].    \bibitem{Esole:2014hya} 
  M.~Esole, S.~H.~Shao and S.~T.~Yau,
  ``Singularities and Gauge Theory Phases II,''
  arXiv:1407.1867 [hep-th].
  \bibitem{Esole:2014bka} 
  M.~Esole, S.~H.~Shao and S.~T.~Yau,
  ``Singularities and Gauge Theory Phases,''
  arXiv:1402.6331 [hep-th].
  \bibitem{Morrison:2011mb} 
  D.~R.~Morrison and W.~Taylor,
  ``Matter and singularities,''
  JHEP {\bf 1201}, 022 (2012)
  [arXiv:1106.3563 [hep-th]].
  
  \bibitem{Hayashi:2014kca} 
  H.~Hayashi, C.~Lawrie, D.~R.~Morrison and S.~Schafer-Nameki,
  ``Box Graphs and Singular Fibers,''
  JHEP {\bf 1405}, 048 (2014)
  [arXiv:1402.2653 [hep-th]].
  
  \bibitem{Cattaneo:2013vda} 
  A.~Cattaneo,
  ``Elliptic fibrations and the singularities of their Weierstrass models,''
  arXiv:1307.7997 [math.AG].
  
  \bibitem{Braun:2013cb} 
  A.~P.~Braun and T.~Watari,
  ``On Singular Fibres in F-Theory,''
  JHEP {\bf 1307}, 031 (2013)
  [arXiv:1301.5814 [hep-th]].
%
\bibitem{Vafa:1996xn}
  C.~Vafa,
  ``Evidence for F theory,''
  Nucl.\ Phys.\  {\bf B469}, 403-418 (1996).
  [hep-th/9602022].
  \bibitem{Morrison:1996na}
  D.~R.~Morrison, C.~Vafa,
  ``Compactifications of F theory on Calabi-Yau threefolds. 1,''
  Nucl.\ Phys.\  {\bf B473}, 74-92 (1996).
  [hep-th/9602114].
\bibitem{GVW} 
  S.~Gukov, C.~Vafa and E.~Witten,
  ``CFT's from Calabi-Yau four folds,''
  Nucl.\ Phys.\ B\ {\bf 584}, 69  (2000)
  [Erratum-ibid.\ B\ {\bf 608}, 477  (2001)]
  [hep-th/9906070].

\bibitem{Weil.NTH} A.~Weil, ``Number Theory:  An approach through history:  From Hammurapi to Legendre," Boston,  Birkh\"auser, 1984.

\bibitem{Poincare.Rank} H.~Poincar\'e , ``Sur les propri\'et\'es arithm\'etiques des courbes alg\'ebriques", Journal de 
Liouville ~7, 1901, p.~161.

\bibitem{Park:2011wv} 
  D.~S.~Park and W.~Taylor,
  ``Constraints on 6D Supergravity Theories with Abelian Gauge Symmetry,''
  JHEP {\bf 1201}, 141 (2012)
  [arXiv:1110.5916 [hep-th]].

\bibitem{Park:2011ji} 
  D.~S.~Park,
  ``Anomaly Equations and Intersection Theory,''
  JHEP {\bf 1201}, 093 (2012)
  [arXiv:1111.2351 [hep-th]].

\bibitem{Cvetic:2013qsa} 
  M.~Cvetic, D.~Klevers, H.~Piragua and P.~Song,
  ``Elliptic fibrations with rank three Mordell-Weil group: F-theory with U(1) x U(1) x U(1) gauge symmetry,''
  JHEP {\bf 1403}, 021 (2014)
  [arXiv:1310.0463 [hep-th], arXiv:1310.0463].

\bibitem{Braun:2014oya} 
  V.~Braun and D.~R.~Morrison,
  ``F-theory on Genus-One Fibrations,''
  JHEP {\bf 1408}, 132 (2014)
  [arXiv:1401.7844 [hep-th]].

\bibitem{Morrison:2014era} 
  D.~R.~Morrison and W.~Taylor,
  ``Sections, multisections, and U(1) fields in F-theory,''
  arXiv:1404.1527 [hep-th].

\bibitem{Mayrhofer:2014opa} 
  C.~Mayrhofer, D.~R.~Morrison, O.~Till and T.~Weigand,
  ``Mordell-Weil Torsion and the Global Structure of Gauge Groups in F-theory,''
  arXiv:1405.3656 [hep-th].

\bibitem{Kuntzler:2014ila} 
  M.~Kuntzler and S.~Schafer-Nameki,
  ``Tate Trees for Elliptic Fibrations with Rank one Mordell-Weil group,''
  arXiv:1406.5174 [hep-th].

\bibitem{Klevers:2014bqa} 
  D.~Klevers, D.~K.~Mayorga Pena, P.~K.~Oehlmann, H.~Piragua and J.~Reuter,
  ``F-Theory on all Toric Hypersurface Fibrations and its Higgs Branches,''
  arXiv:1408.4808 [hep-th].

\bibitem{Braun:2014nva} 
  A.~P.~Braun, A.~Collinucci and R.~Valandro,
  ``The fate of U(1)'s at strong coupling in F-theory,''
  JHEP {\bf 1407}, 028 (2014)
  [arXiv:1402.4054 [hep-th]].

\bibitem{Cvetic:2013jta} 
  M.~Cvetic, D.~Klevers and H.~Piragua,
  ``F-Theory Compactifications with Multiple U(1)-Factors: Addendum,''
  JHEP {\bf 1312}, 056 (2013)
  [arXiv:1307.6425 [hep-th]].

\bibitem{Borchmann:2013hta} 
  J.~Borchmann, C.~Mayrhofer, E.~Palti and T.~Weigand,
  ``SU(5) Tops with Multiple U(1)s in F-theory,''
  Nucl.\ Phys.\ B {\bf 882}, 1 (2014)
  [arXiv:1307.2902 [hep-th]].

\bibitem{Cvetic:2013uta} 
  M.~Cvetic, A.~Grassi, D.~Klevers and H.~Piragua,
  ``Chiral Four-Dimensional F-Theory Compactifications With SU(5) and Multiple U(1)-Factors,''
  JHEP {\bf 1404}, 010 (2014)
  [arXiv:1306.3987 [hep-th]].

\bibitem{Braun:2013nqa} 
  V.~Braun, T.~W.~Grimm and J.~Keitel,
  ``Geometric Engineering in Toric F-Theory and GUTs with U(1) Gauge Factors,''
  JHEP {\bf 1312}, 069 (2013)
  [arXiv:1306.0577 [hep-th]].

\bibitem{Grimm:2013oga} 
  T.~W.~Grimm, A.~Kapfer and J.~Keitel,
  ``Effective action of 6D F-Theory with U(1) factors: Rational sections make Chern-Simons terms jump,''
  JHEP {\bf 1307}, 115 (2013)
  [arXiv:1305.1929 [hep-th]].

\bibitem{Cvetic:2013nia} 
  M.~Cvetic, D.~Klevers and H.~Piragua,
  ``F-Theory Compactifications with Multiple U(1)-Factors: Constructing Elliptic Fibrations with Rational Sections,''
  JHEP {\bf 1306}, 067 (2013)
  [arXiv:1303.6970 [hep-th]].

\bibitem{Borchmann:2013jwa} 
  J.~Borchmann, C.~Mayrhofer, E.~Palti and T.~Weigand,
  ``Elliptic fibrations for $SU(5)\times U(1)\times U(1)$ F-theory vacua,''
  Phys.\ Rev.\ D {\bf 88}, no. 4, 046005 (2013)
  [arXiv:1303.5054 [hep-th]].

\bibitem{Braun:2013yti} 
  V.~Braun, T.~W.~Grimm and J.~Keitel,
  ``New Global F-theory GUTs with U(1) symmetries,''
  JHEP {\bf 1309}, 154 (2013)
  [arXiv:1302.1854 [hep-th]].

\bibitem{Mayrhofer:2012zy} 
  C.~Mayrhofer, E.~Palti and T.~Weigand,
  ``U(1) symmetries in F-theory GUTs with multiple sections,''
  JHEP {\bf 1303}, 098 (2013)
  [arXiv:1211.6742 [hep-th]].

\bibitem{Cvetic:2012xn} 
  M.~Cvetic, T.~W.~Grimm and D.~Klevers,
  ``Anomaly Cancellation And Abelian Gauge Symmetries In F-theory,''
  JHEP {\bf 1302}, 101 (2013)
  [arXiv:1210.6034 [hep-th]].
   
  \bibitem{Braun:2011zm} 
  A.~P.~Braun, A.~Collinucci and R.~Valandro,
  ``G-flux in F-theory and algebraic cycles,''
  Nucl.\ Phys.\ B {\bf 856}, 129 (2012)
  [arXiv:1107.5337 [hep-th]].
  

  \bibitem{Clingher:2012rg} 
  A.~Clingher, R.~Donagi and M.~Wijnholt,
  ``The Sen Limit,''
  arXiv:1212.4505 [hep-th].
  
  \bibitem{Sethi:1996es} 
  S.~Sethi, C.~Vafa and E.~Witten,
  ``Constraints on low dimensional string compactifications,''
  Nucl.\ Phys.\ B {\bf 480}, 213 (1996)
  [hep-th/9606122].

\end{document}